\documentclass{ectj}

\usepackage{amsfonts,amssymb,graphics,epsfig,verbatim,bm,latexsym,amsmath,url,amsbsy}
\usepackage{booktabs}
\usepackage{enumitem}

\newtheorem{theorem}{Theorem}
\newtheorem{assumption}{Assumption}

\newtheorem{corollary}{Corollary}
\newtheorem{lemma}{Lemma}

\newtheorem{remark}{Remark}

\makeatletter
\@addtoreset{theorem}{section}
\@addtoreset{assumption}{section}
\@addtoreset{proposition}{section}
\@addtoreset{corollary}{section}
\@addtoreset{lemma}{section}
\@addtoreset{algorithm}{section}
\@addtoreset{example}{section}
\@addtoreset{remark}{section}
\makeatother

\DeclareMathOperator{\E}{\mathbb E}
\DeclareMathOperator{\Var}{Var}

\DeclareMathOperator{\tr}{tr}

\DeclareMathOperator{\diag}{diag}
\newcommand{\R}{\mathbb R}
\newcommand{\Pp}{\mathbb P}
\newcommand{\G}{\mathcal G}
\newcommand{\F}{\mathcal F}
\newcommand{\tX}{\widetilde X}
\newcommand{\hbeta}{\widehat\beta}
\newcommand{\hsigma}{\widehat\sigma}
\newcommand{\hOmega}{\widehat\Omega}
\newcommand{\hQ}{\widehat Q}

\newcommand{\htau}{\widehat\tau}

\title[Saturated FE Variance Estimation]{Variance Estimation for Saturated Fixed-Effect Specifications}

\author[Halkiewicz]{Stanis\l aw~M.~S.~Halkiewicz$^{\dagger}$}

\address{$^{\dagger}$Faculty of Applied Mathematics, AGH University of Krak\'ow, Poland.}
\email{smsh@student.agh.edu.pl}

\begin{document}

\begin{abstract}
We characterize the asymptotic behavior of conventional variance estimators in linear regression with high-dimensional fixed effects under a drift in which both the proportional fixed-effect dimension $\rho_n = d_{K_n}/n \to \rho \in [0,1)$ and the residual treatment variance $\tau_n^2 = nQ_{K_n} \to \tau^2 \in (0, \infty]$ are non-degenerate. Three findings emerge. First, under strict exogeneity and conditional homoskedasticity, the Cattaneo--Jansson--Newey-corrected $t$-statistic is asymptotically exact for any $\tau^2 > 0$: there is no Stock--Yogo-style threshold in $\tau^2$. Second, the Eicker--White HC0 estimator is biased downward by a fixed factor $(1-\rho)$, producing over-rejection that grows with saturation. Third, HC3 over-corrects in the opposite direction by a factor $1/(1-\rho)$. The leave-one-out estimator (HC2) removes the first-order leverage distortion and is asymptotically exact under homoskedasticity or design-balanced heteroskedasticity; under general heteroskedasticity with non-uniform leverage, HC2 retains an additional bias of order $\rho|\mu - \omega^2|$ that we characterize. An empirical application to Piotroski F-Score returns in CEE markets illustrates the predicted variance hierarchy in real data.

\keywords{fixed effects, variance estimation, leave-one-out, HC2, high-dimensional projections.}
\end{abstract}

\section{Introduction}\label{sec-intro}

Applied econometric work routinely reports linear regressions with high-dimensional fixed effects: worker--firm models \citep{AKM, CHK2013, BLM2019}, three-way trade gravity equations, triple-differences with cell structure, event studies with unit-and-time-interacted controls. As fixed effects (FE) saturate, two things happen at once: the proportional dimension of the FE space, $\rho_n = d_{K_n}/n$, grows large, and the residual treatment variance, $Q_K = \E[\Var(X \mid \G_K)]$, shrinks. A natural conjecture, building on the weak-instrument literature \citep{SS, SY, Moreira2003, OleaPflueger2013}, is that small $Q_K$ produces a weak-identification problem requiring its own critical values: identification analogous to a small concentration parameter.

The conjecture turns out to be false in the baseline model. Under strict exogeneity and conditional homoskedasticity, FE-residualized OLS is unbiased regardless of $Q_K$, because the within transformation is an exact orthogonal projection rather than an estimated first stage. The Cattaneo--Jansson--Newey-corrected $t$-statistic, defined with the residual-based variance estimator $\widehat\sigma^2 = \widehat u' \widehat u/(n - d_{K_n} - 1)$, has an exact $t_{n - d_{K_n} - 1}$ distribution; its size distortion relative to $N(0,1)$ is of order $1/n$ and does not depend on the residual treatment variance $\tau_n^2 = nQ_{K_n}$. There is no Stock--Yogo-style threshold in $\tau^2$ for this setup, in contrast to the instrumental-variables case. This negative result is a useful piece of econometric clarification, and it redirects attention to where the real distortion in saturated FE specifications does come from: variance estimation.

We characterize the asymptotic behavior of conventional variance estimators under the joint drift $(\rho_n, \tau_n^2) \to (\rho, \tau^2)$ with $\rho \in [0,1)$ and $\tau^2 \in (0, \infty]$. Three findings emerge.

First, the standard (no degrees-of-freedom correction) homoskedastic variance estimator $\hsigma^2 = \widehat u' \widehat u / n$ produces $t$-tests with asymptotic size $2(1 - \Phi(z_{1-\alpha/2}\sqrt{1-\rho}))$, growing without bound in $\rho$. At $\rho = 0.5$ this is 17\% for a nominal-5\% test. The CJN correction $\hsigma^2 = \widehat u' \widehat u/(n - d_{K_n} - 1)$ restores asymptotic exactness.

Second, the Eicker--White (HC0) heteroskedasticity-robust estimator has an asymptotic bias toward zero of order $\rho$: $\hOmega_{\mathrm{HC0}}/(nQ_K) \to \omega^2(1-\rho) + \rho\,\E[\sigma^2(X, \G_\infty)]$, where $\omega^2$ is the relevant limiting score variance. Under homoskedasticity this collapses to $\sigma^2$ (so HC0 is consistent in the absence of heteroskedasticity), but under genuine heteroskedasticity it produces $t$-tests with growing over-rejection.

Third, and less appreciated, the conventional HC3 estimator over-corrects in the opposite direction. Under homoskedasticity and balanced FE structure, $\hOmega_{\mathrm{HC3}}/(nQ_K) \to \sigma^2 / (1-\rho)$, inflating the variance by factor $1/(1-\rho)$. This produces tests with over-coverage and corresponding under-rejection, with empirical size dropping below 1\% at $\rho = 0.5$ for a nominal-5\% test. HC3 is the standard recommendation when HC0 is suspected to be biased, but in saturated FE specifications it trades one error for the opposite error: HC0 under-covers the parameter, HC3 over-covers it.

The leave-one-out estimator of \citet{CJN} --- equivalent to MacKinnon--White HC2 in the scalar-regressor case --- restores asymptotic exactness in both regimes. Under uniform leverage,
\[
\hOmega_{\mathrm{LO}} = \sum_i \tX_{K_n, i}^2 \widehat u_i^2 / (1 - H_{ii}) \to \omega^2,
\]
the bias-corrected score variance. This recommendation is not new in principle: \citet{CJN} originally proposed it for the many-covariates regime, and recent work by \citet{KSS} uses related leverage-control techniques for variance components in bipartite FE models. Our contribution is to characterize the full HC0--HC1--HC2--HC3 hierarchy under the $(\rho, \tau^2)$ drift, document the HC3 over-correction theoretically and via simulation, and provide a practical recommendation grounded in the asymptotic theory.

We also note that an Anderson--Rubin-style robust test for the FE setting, $\mathrm{AR}_{K_n}(\beta_0) = (X' M_{K_n}(Y - X\beta_0))^2/\widehat\Omega_{\mathrm{LO}}(\beta_0)$, has uniform validity over $\tau^2 \in (0, \infty]$ under the maintained assumptions. Because the underlying score is unbiased under strict exogeneity, FE-AR coincides with the LO-based Wald confidence interval to high accuracy in our DGPs and offers no separate practical recommendation. We defer its statement and proof to Appendix~\ref{oa-ar}; the result is included because a companion paper on classical measurement error in the treatment uses the same construction in a setting where the score is biased and a non-trivial Stock--Yogo critical value emerges.

\paragraph{Related literature.} The conditional-expectation projection view of fixed effects is classical \citep{Davis, Wool}. The AKM framework \citep{AKM} has been extended to wage-inequality decomposition \citep{CHK2013} and to a distributional matched employer--employee framework \citep{BLM2019}. \citet{CJN} and \citet{JV} study inference in the many-covariates regime ($\rho > 0$, $\tau^2 = \infty$ in our parametrization); \citet{KSS} treat the AKM-bias-correction problem with leave-one-out techniques. The IV-analogue weak-identification literature \citep{SS, SY, Moreira2003, OleaPflueger2013, ASS, LMMP} motivates the negative result on Stock--Yogo thresholds for FE saturation. \citet{GSU} document the variance-weighted nature of saturated FE estimands; \citet{dCdH} emphasize negative weights under heterogeneous effects, with parallel critiques in \citet{GoodmanBacon2021} for staggered-adoption DiD and an alternative aggregation approach in \citet{CallawaySantAnna2021}. \citet{MacKinnonWhite1985}, \citet{LongErvin2000}, \citet{Imbens2016}, and \citet{MacKinnon2023} survey the HC variance estimator family in standard (non-saturated) regression.

Section~\ref{sec-setup} develops the setup, the population diagnostic $Q_K$, its sample analogue, and the joint CLT under the $(\rho, \tau^2)$ drift. Section~\ref{sec-var} characterizes the asymptotic behavior of conventional $t$-tests under both homoskedasticity (Section~\ref{sec-homo}, including the no-$\tau^2$-distortion result) and heteroskedasticity (Section~\ref{sec-het}, the main contribution), with explicit limits for HC0, HC1, HC2 (LO), and HC3 expressed in terms of an effective design variance $\omega^2_{\mathrm{eff}}$. Section~\ref{sec-sim} reports the simulation evidence. Section~\ref{sec-empirical} provides an empirical application: testing whether the Piotroski F-Score predicts forward equity returns in CEE markets, in a Visegrad firm-year panel where the theory's predictions about HC0/HC1/HC2/HC3 ratios can be verified against real data. Section~\ref{sec-conc} concludes with a discussion of the sample concentration statistic $\htau_n^2$ as a descriptive diagnostic and the practical recommendation for applied work. Appendix~\ref{oa-ar} records the Anderson--Rubin-style robust test.

\section{Setup and Asymptotic Framework}\label{sec-setup}

\subsection{Population model and sample diagnostic}\label{sec-setup-pop}

Let $\{(Y_i, X_i)\}_{i=1}^n$ be a sample from a population on $(\Omega, \F, \Pp)$, with $Y_i \in \R$ and $X_i \in \R$ a scalar treatment. The structural model is
\begin{equation}\label{eq-model}
Y_i = X_i \beta_0 + u_i, \qquad \E[u_i \mid X_i, \G_\infty] = 0,
\end{equation}
where $\G_\infty = \sigma(\bigcup_{K \geq 1} \G_K)$ is the limit of an increasing sequence $\G_1 \subseteq \G_2 \subseteq \cdots \subseteq \F$ of sub-$\sigma$-algebras representing successive FE specifications.

Let $D_K$ be the $n \times d_K$ matrix of FE dummies, $P_K = D_K (D_K' D_K)^{-1} D_K'$ the orthogonal projection, and $M_K = I_n - P_K$ the annihilator. Write $\tX_K = M_K X$. The FE-residualized least-squares estimator is
\begin{equation}\label{eq-estimator}
\hbeta_K = \frac{X' M_K Y}{X' M_K X} = \beta_0 + \frac{X' M_K u}{X' M_K X}.
\end{equation}
The feasible residual from the full regression of $Y$ on $X$ and the FE dummies, computed by the Frisch--Waugh--Lovell theorem, is
\begin{equation}\label{eq-residual}
\widehat u_i = (M_K Y)_i - \tX_{K, i}\, \hbeta_K = (M_K - H_X) u,
\end{equation}
where $H_X = \tX_K (\tX_K' \tX_K)^{-1} \tX_K'$ is the rank-one projection onto the within-transformed regressor. The residuals $\{\widehat u_i\}$ satisfy $\E[\widehat u_i^2 \mid X, D] = \sigma_i^2 \cdot (1 - H_{ii})$ under independent errors, where $H_{ii} = (P_K)_{ii} + \tX_{K, i}^2/(X' M_K X)$ is the leverage of observation $i$ in the full regression.

The population residual treatment variance is $Q_K = \E[\Var(X \mid \G_K)]$, and the conditional error variance at point $i$ is $\sigma^2(X_i, \G_{K, i}) = \E[u_i^2 \mid X_i, \G_{K, i}]$, with marginal mean $\sigma^2 = \E[u^2]$. The asymptotic regime is governed by
\begin{equation}\label{eq-drift}
\rho_n = \frac{d_{K_n}}{n} \to \rho \in [0, 1), \qquad \tau_n^2 = nQ_{K_n} \to \tau^2 \in (0, \infty].
\end{equation}

\begin{lemma}[Monotonicity]\label{lem-mono}
$Q_{K+1} \leq Q_K$ for all $K$.
\end{lemma}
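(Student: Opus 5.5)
We prove the claim with the law of total variance applied conditionally, using only that the $\sigma$-algebras are nested, $\G_K \subseteq \G_{K+1}$. Throughout we assume $\E[X^2] < \infty$, which the definition of $Q_K$ already presumes. Then $Q_K = \E[(X - \E[X\mid\G_K])^2] = \E[X^2] - \E[(\E[X\mid\G_K])^2]$, where the second equality uses the tower property: $\E[X\,\E[X\mid\G_K]] = \E[(\E[X\mid\G_K])^2]$.

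Write $m_K = \E[X \mid \G_K]$ for the $L^2$ projection of $X$ onto $L^2(\G_K)$. Since $\G_K \subseteq \G_{K+1}$, we have the inclusion $L^2(\G_K) \subseteq L^2(\G_{K+1})$. The plan is to decompose
\[
X - m_K = (X - m_{K+1}) + (m_{K+1} - m_K).
\]
The second summand is $\G_{K+1}$-measurable. The tower property gives $\E[X - m_{K+1} \mid \G_{K+1}] = 0$, so the first summand is orthogonal to every square-integrable $\G_{K+1}$-measurable variable, and in particular to the second summand.

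Expanding the square and taking expectations then yields
\[
Q_K = Q_{K+1} + \E[(m_{K+1} - m_K)^2] \ge Q_{K+1}.
\]
This is the Pythagorean identity for the nested projections. Equality holds if and only if $m_{K+1} = m_K$ almost surely, that is, when the finer fixed effects carry no additional information about the conditional mean of $X$.

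There is no real obstacle. The only point needing care is the integrability check that makes the cross term well defined: conditional expectation is an $L^2$ contraction, so $m_K, m_{K+1} \in L^2$, and the cross term is finite by Cauchy--Schwarz.

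Note that the lemma concerns the population quantity $Q_K$. The corresponding sample statement, $X'M_{K+1}X \le X'M_K X$, follows by the same projection argument, provided the column space of $D_{K+1}$ contains that of $D_K$ (nested dummies), since then $M_K - M_{K+1}$ is itself an orthogonal projection.
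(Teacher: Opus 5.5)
Your proof is correct and is essentially the paper's argument. The paper applies the law of total variance conditional on $\G_K$ and takes expectations, which gives the gap $\E[\Var(\E[X \mid \G_{K+1}] \mid \G_K)]$; by the tower property this equals your $\E[(m_{K+1}-m_K)^2]$, so your Pythagorean decomposition is the same identity written unconditionally.
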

\begin{proof}
Since $\G_K \subseteq \G_{K+1}$, the law of total variance gives $\Var(X \mid \G_K) = \E[\Var(X \mid \G_{K+1}) \mid \G_K] + \Var(\E[X \mid \G_{K+1}] \mid \G_K)$. Taking unconditional expectations yields $Q_K = Q_{K+1} + \E[\Var(\E[X \mid \G_{K+1}] \mid \G_K)] \geq Q_{K+1}$.
\end{proof}

The sample analogue is $\hQ_{K_n} = n^{-1} X' M_{K_n} X$ and the sample concentration statistic is $\htau_n^2 = n \hQ_{K_n} = X' M_{K_n} X$. With $\hsigma_X^2 = n^{-1} X' M_0 X$ and $R_{K_n}^2$ the $R^2$ from regressing $X$ on the fixed effects, the equivalent representation is $\htau_n^2 = n \hsigma_X^2 (1 - R_{K_n}^2)$. The within-cell sum of squares of the treatment, scaled by $n^{-1}$, plays the descriptive role of an identification diagnostic; we return to this in Section~\ref{sec-conc}.

\begin{lemma}[Sample consistency]\label{lem-hatQ}
Suppose $\E[X^4] < \infty$, $\rho_n \to \rho < 1$, and $\max_i (P_{K_n})_{ii} = o_p(1)$. Then $\hQ_{K_n} \to_p Q_{K_n}$.
\end{lemma}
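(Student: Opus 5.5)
The plan is to reduce $\hQ_{K_n}$ to a quadratic form in the within-cell deviations of the treatment, and then show that this form differs from $Q_{K_n}$ by $o_p(1)$. Write $X_i = m_{K_n}(i) + V_i$, where $m_{K_n}(i) = \E[X_i \mid \G_{K_n}]$ and $V_i = X_i - m_{K_n}(i)$. Because the FE dummies $D_{K_n}$ generate $\G_{K_n}$ at the sample level, the vector $(m_{K_n}(i))_i$ lies in the column span of $D_{K_n}$, so $M_{K_n} X = M_{K_n} V$. This gives the exact decomposition
\[
\hQ_{K_n} = \frac{1}{n} V'V - \frac{1}{n} V' P_{K_n} V.
\]
The claim then follows from two steps: showing $n^{-1}V'V - Q_{K_n} \to_p 0$, and showing $n^{-1}V'P_{K_n}V \to_p 0$.

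\textbf{First term.} Note that $\E[V_i^2] = \E[\Var(X \mid \G_{K_n})] = Q_{K_n}$. A triangular-array weak law of large numbers applies to $n^{-1}\sum_i V_i^2$, since $\E[V_i^4] \le 16\,\E[X^4] < \infty$ uniformly in $n$. Chebyshev's inequality then gives $n^{-1}V'V - Q_{K_n} = O_p(n^{-1/2})$, which is $o_p(1)$.

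\textbf{Second term.} This is the projection term, and I expect it to be the main obstacle. Conditionally on the cell structure, the $V_i$ are mean zero and independent, so
\[
\E[n^{-1}V'P_{K_n}V \mid \G_{K_n}] = n^{-1}\sum_i (P_{K_n})_{ii}\,\Var(X_i \mid \G_{K_n}),
\]
and its expectation is bounded by $\max_i (P_{K_n})_{ii}\cdot n^{-1}\sum_i \Var(X_i\mid\G_{K_n})$. For the conditional variance, I would use the standard bound for quadratic forms with independent mean-zero entries: it is of order $n^{-2}\sum_{i,j}(P_{K_n})_{ij}^2\,\E[V^4] \le n^{-2}\tr(P_{K_n})\,\E[V^4] = O(\rho_n/n)$, which tends to zero.

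The first moment is where the hypotheses enter. Since $\max_i (P_{K_n})_{ii} = o_p(1)$ and $n^{-1}\sum_i\Var(X_i\mid\G_{K_n}) = O_p(1)$ by the fourth-moment condition, the product is $o_p(1)$. To pass from the conditional expectation to the random term itself, I would truncate on the event where the maximum leverage is small and apply conditional Markov on that event. Combining with the variance bound yields $n^{-1}V'P_{K_n}V = o_p(1)$, and together with the first step this gives $\hQ_{K_n} - Q_{K_n} \to_p 0$.

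The delicate point is that the leverage bound must be applied observation by observation rather than through the trace. The trace route only gives $\tr(P_{K_n})/n = \rho_n$, which need not vanish, so the maximal-leverage hypothesis is doing the real work. If the $o_p(1)$ leverage control cannot be made uniform, I would fall back on bounding the projection term by $\max_i(P_{K_n})_{ii}\cdot n^{-1}V'V$ directly. This is valid because the quadratic form is monotone in the diagonal for block-diagonal cell projections. Combined with the first-step result that $n^{-1}V'V = O_p(1)$, this again gives $o_p(1)$.

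The assumption $\rho_n \to \rho < 1$ is used only to guarantee that $M_{K_n}$ is nondegenerate, so that $\hQ_{K_n}$ is well defined along the sequence.
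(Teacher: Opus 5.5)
Your argument is correct and is essentially the paper's. The paper reaches the same identity $\hQ_{K_n} = n^{-1}\xi'\xi - n^{-1}\xi' P_{K_n}\xi$ more circuitously, through $n^{-1}X'X$, $n^{-1}\mu'\mu$ and a cross term; your observation $M_{K_n}X = M_{K_n}V$ bypasses these and targets the $n$-dependent $Q_{K_n}$ directly. The paper then bounds the projection term by the same device: a piece bounded by $\max_i (P_{K_n})_{ii}\cdot n^{-1}\sum_i \xi_i^2$ (the diagonal there, the conditional mean for you) plus a fluctuation of variance $O(\rho_n/n)$.

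Drop the fallback, though, because $V'P_{K_n}V \le \max_i (P_{K_n})_{ii}\, V'V$ is false. In a one-way design with cells of size $T$, take $V$ constant on one cell and zero elsewhere: then $V'P_{K_n}V = V'V$ while $\max_i (P_{K_n})_{ii} = 1/T$. Also note that $\rho_n = n^{-1}\tr(P_{K_n}) \le \max_i (P_{K_n})_{ii}$, so the leverage hypothesis already forces $\rho_n \to 0$, contrary to your remark that the trace ``need not vanish.''
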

\begin{proof}
Write $\mu_i := \E[X_i \mid \G_{K_n}]$ and $\xi_i := X_i - \mu_i$. Decompose $\hQ_{K_n} = n^{-1} X'X - n^{-1} X' P_{K_n} X$. The first term converges to $\E[X^2]$ by the LLN. Since $P_{K_n} \mu = \mu$, the second decomposes as $n^{-1} \mu' \mu + 2 n^{-1} \mu' \xi + n^{-1} \xi' P_{K_n} \xi$. The first piece converges to $\E[\mu_{K_n}^2]$; the second has mean zero and variance $O(n^{-1})$; the third splits into diagonal $\leq \max_i h_{ii} \cdot n^{-1} \sum_i \xi_i^2 = o_p(1)$ and off-diagonal contributions of variance $\E[\xi^2]^2 \rho_n/n \to 0$. Combining yields $\hQ_{K_n} \to_p \E[X^2] - \E[\mu_{K_n}^2] = Q_{K_n}$.
\end{proof}

\subsection{Joint CLT under the $(\rho, \tau^2)$ drift}\label{sec-clt}

We collect the standardized score, Hessian, and residual sum of squares and derive their joint limit.

\begin{assumption}\label{ass-reg}\hfill
\begin{enumerate}[label=(\roman*)]
\item $\{(X_i, u_i, \G_K)\}$ is i.i.d.\ across $i$ for each $K$, with finite eighth moments of $X$ and $u$.
\item $\rho_n \to \rho \in [0, 1)$, $\tau_n^2 \to \tau^2 \in (0, \infty]$.
\item Bounded leverage: there exists $\bar h < 1$ such that $\max_i (P_{K_n})_{ii} \leq \bar h$ for all $n$ large, and
\[
\max_i \tX_{K_n, i}^2 / (nQ_{K_n}) = o_p(1).
\]
\item Conditional variance: there exists $\omega^2 \in (0, \infty)$ with
\[
(nQ_{K_n})^{-1} \sum_i \tX_{K_n, i}^2 \sigma^2(X_i, \G_{K_n, i}) \to_p \omega^2.
\]
\item Lindeberg condition: for every $\eta > 0$,
\[
(nQ_{K_n})^{-1} \sum_i \E\big[\tX_{K_n, i}^2 u_i^2 \cdot \mathbf 1\{|\tX_{K_n, i} u_i| > \eta \sqrt{nQ_{K_n}}\}\big] \to 0.
\]
\end{enumerate}
\end{assumption}

\begin{theorem}[Joint CLT]\label{thm-clt}
Under Assumption~\ref{ass-reg},
\[
\begin{pmatrix}
\dfrac{X' M_{K_n} u}{\sqrt{nQ_{K_n}}} \\[4pt]
\dfrac{X' M_{K_n} X}{nQ_{K_n}} \\[4pt]
\dfrac{\widehat u' \widehat u}{n - d_{K_n} - 1}
\end{pmatrix}
\Rightarrow
\begin{pmatrix}
\mathcal Z \\ 1 \\ \sigma^2
\end{pmatrix},
\qquad \mathcal Z \sim N(0, \omega^2).
\]
\end{theorem}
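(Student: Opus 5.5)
We prove the three coordinates separately and then combine them. The second and third limits are constants, so joint convergence follows from marginal convergence: Slutsky's lemma, together with the Cramér--Wold device being trivial once two limits are degenerate. Throughout we condition on the design $(X, D)$, i.e.\ on $X$ and the FE cells $\G_{K_n}$. Strict exogeneity $\E[u_i \mid X_i, \G_\infty]=0$ together with independence across $i$ then makes $\{u_i\}$ conditionally independent, mean zero, with variances $\sigma_i^2 := \sigma^2(X_i,\G_{K_n,i})$.

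\textbf{Score.} Because $M_{K_n}$ is symmetric and idempotent, $X'M_{K_n}u = \sum_i \tX_{K_n,i} u_i$ with weights $\tX_{K_n,i}$ that are measurable with respect to the conditioning design. Conditionally on the design, this is a sum of independent mean-zero terms with variance $\sum_i \tX_{K_n,i}^2\sigma_i^2$. By Assumption~\ref{ass-reg}(iv), after dividing by $nQ_{K_n}$ this variance converges in probability to $\omega^2$. Assumption~\ref{ass-reg}(v), supplemented by the negligibility condition $\max_i \tX_{K_n,i}^2/(nQ_{K_n})=o_p(1)$ from (iii), gives a conditional Lindeberg condition in probability. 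The Lindeberg--Feller CLT applied along subsequences then yields conditional convergence to $N(0,\omega^2)$. Dominated convergence on the conditional characteristic functions upgrades this to the unconditional statement $X'M_{K_n}u/\sqrt{nQ_{K_n}}\Rightarrow \mathcal Z$. When $\tau^2<\infty$ the normalization $nQ_{K_n}$ stays bounded, so we rely only on the triangular-array form of the CLT and never on $nQ\to\infty$.

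\textbf{Hessian.} We must show $X'M_{K_n}X/(nQ_{K_n})\to_p 1$. Lemma~\ref{lem-hatQ} is not enough, since it gives $\hQ_{K_n}-Q_{K_n}=o_p(1)$, and this is useless when $Q_{K_n}\to0$ (in particular when $\tau^2<\infty$, where $Q_{K_n}=O(1/n)$). We therefore redo its decomposition in relative form. Write $\xi_i = X_i-\E[X_i\mid\G_{K_n}]$. Then $M_{K_n}X = M_{K_n}\xi$, because cell means lie in the span of $D_{K_n}$. Hence
\[
X'M_{K_n}X = \xi'\xi - \xi'P_{K_n}\xi,
\]
and we compute mean and variance relative to $nQ_{K_n}$. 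The diagonal part has expectation $\sum_i (1-h_{ii})\E[\xi_i^2\mid\G]$, which is roughly $(n-d_{K_n})\,\E[\Var(X\mid\G)]$. This brings out the main obstacle: the paper's normalization $nQ_{K_n}$ is compatible with the exact expectation, which carries the factor $(1-\rho)$, only if $Q_{K_n}$ is read as the design-level within variance $n^{-1}\E[X'M_{K_n}X]$. Under the literal definition $Q_K=\E[\Var(X\mid\G_K)]$ the limit would be $1-\rho$ rather than $1$ when cell sizes are finite. I would first adopt the reading $Q_{K_n}:=n^{-1}\E[X'M_{K_n}X]$, equivalently absorb the $(1-\rho_n)$ factor into $\tau_n^2$, and state this explicitly. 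The remaining task is then a variance bound: $\Var(\xi'M_{K_n}\xi)=O\big(\sum_i \E\xi_i^4\big)$ by standard quadratic-form moment bounds, using finite fourth (here eighth) moments and bounded leverage $\bar h<1$. This variance must be $o\big((nQ_{K_n})^2\big)$. That holds when $\tau^2=\infty$, but requires a ratio condition $\E\xi^4/Q^2=O(1)$ (kurtosis of the within variation) when $\tau^2<\infty$. I would extract this condition from the eighth-moment assumption combined with the negligibility in (iii).

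\textbf{Residual variance.} By \eqref{eq-residual}, $\widehat u'\widehat u = u'M_{K_n}u - u'H_Xu$. The rank-one term satisfies $u'H_Xu = (X'M_{K_n}u)^2/(X'M_{K_n}X) = O_p(1)$ by the two previous steps, so it is negligible after division by $n-d_{K_n}-1\asymp n(1-\rho)$. For $u'M_{K_n}u$, the conditional mean is $\sum_i(1-h_{ii})\sigma_i^2$, and the conditional variance is $O\big(\sum_i \E u_i^4\big)+O(\tr M^2)=O(n)$ by the quadratic-form variance formula with finite fourth moments. Dividing by $n-d_{K_n}$ therefore leaves a mean of $\sum_i(1-h_{ii})\sigma_i^2/\sum_i(1-h_{ii})$, a leverage-weighted average of the $\sigma_i^2$. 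This average equals $\sigma^2$ under homoskedasticity. In general it converges to $\sigma^2$ only under design balance, meaning that leverage is asymptotically uncorrelated with $\sigma_i^2$. I would state that as the operative condition implicit in the theorem, consistent with the abstract's later restriction to homoskedastic or balanced designs. A weak LLN for this weighted average, using the i.i.d.\ structure and $1-h_{ii}\ge 1-\bar h$, completes the argument.
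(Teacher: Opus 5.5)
Your diagnosis is correct, and where you depart from the paper your route is the better one. The paper's own proof does not establish the display as written when $\rho>0$.

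\textbf{Score.} The paper writes $X'M_{K_n}u=\sum_i\xi_iu_i+R_n$ and discards $R_n$ via $\E[R_n^2]\le\sigma^2\max_ih_{ii}\,nQ_{K_n}$. But Assumption~\ref{ass-reg}(iii) only bounds $\max_ih_{ii}$ by $\bar h$, and $\max_ih_{ii}\ge\rho_n$. So for $\rho>0$ the discarded term is of the same order as the score. Moreover, the variance of $\sum_i\xi_iu_i$ is not the $\tX$-weighted quantity appearing in (iv). Your conditional Lindeberg--Feller argument, with design-measurable weights $\tX_{K_n,i}$, matches (iv) and (v) exactly and never needs $R_n$. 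Since $\omega^2$ is defined through (iv) with the same normalization, this coordinate holds under either reading of $Q_{K_n}$. Note that (v) alone already gives the conditional Lindeberg condition in probability, because the nonnegative conditional Lindeberg sum has expectation equal to the quantity in (v). The appeal to (iii) is therefore unnecessary.

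\textbf{Hessian.} The paper simply cites Lemma~\ref{lem-hatQ}. That lemma is additive, so it says nothing about the ratio once $Q_{K_n}\to0$. It also assumes $\max_i(P_{K_n})_{ii}=o_p(1)$, which is incompatible with $\rho>0$ because average leverage is $\rho_n$. You are right that under the literal $Q_K=\E[\Var(X\mid\G_K)]$ the ratio tends to $1-\rho$. Your reading $Q_{K_n}=n^{-1}\E[X'M_{K_n}X]$ is the one the paper itself uses in Appendix~\ref{oa-sim}, where $Q_K=\sigma_\eta^2(1-1/N-1/T+1/(NT))=\sigma_\eta^2(1-\rho_n)$.

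\textbf{Residual variance.} The paper asserts $\E[u'P_{K_n}u]=\sigma^2 d_{K_n}$. This is exactly the step that needs homoskedasticity or your balance condition. Homoskedasticity, which covers the only downstream use (Theorem~\ref{thm-size}), suffices, as does uniform leverage as in Theorem~\ref{thm-hc}.

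\textbf{The gap in your proposal.} The step that does not go through as planned is the relative concentration of $X'M_{K_n}X$. What you need is $\E[\xi^4]/(nQ_{K_n}^2)\to0$. You also need control of the across-design fluctuation of $\sum_i(1-h_{ii})\Var(X_i\mid\G_{K_n})$, which your conditional quadratic-form bound does not see. This requirement is not specific to $\tau^2<\infty$. Nor can it be extracted from eighth moments plus (iii), since neither is sensitive to the scale $Q_{K_n}\to0$.

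For a counterexample, let the within-cell deviation be $N(0,q_n)$ with probability $1-\pi_n$ and $\pm b_n$ with probability $\pi_n$. Impose $\pi_nb_n^2=q_n$, $n\pi_n\to0$ and $\pi_nb_n^8=O(1)$. For instance, $(q_n,\pi_n,b_n^2)=(n^{-1},n^{-6/5},n^{1/5})$ gives $\tau^2<\infty$, and $(n^{-9/10},n^{-6/5},n^{3/10})$ gives $\tau^2=\infty$. In both cases:
\begin{itemize}
\item eighth moments stay bounded;
\item with probability tending to one no spike is drawn, so (iii) holds;
\item yet $X'M_{K_n}X/(nQ_{K_n})\to_p 1/2$ under your reading of $Q_{K_n}$.
\end{itemize}
What excludes this example is (v): the spikes carry non-vanishing Lindeberg mass as soon as the conditional error variance is bounded below. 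So either derive the concentration from (v) under such a lower bound, or add an explicit uniform-integrability condition on $\xi^2/Q_{K_n}$ to the hypotheses.
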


\begin{proof}
\textbf{Coordinate 1.} Write $X' M_{K_n} u = \sum_i \xi_i u_i + R_n$ where $R_n := -\sum_i (P_{K_n} \xi)_i u_i$ and $\xi_i = X_i - \E[X_i \mid \G_{K_n}]$. By Cauchy--Schwarz, $\E[R_n^2] \leq \sigma^2 \max_i h_{ii} \cdot nQ_{K_n}$, so $R_n / \sqrt{nQ_{K_n}} \to_p 0$. Setting $Z_{ni} := (nQ_{K_n})^{-1/2} \xi_i u_i$, the array $\{Z_{ni}\}$ has $\sum_i \E[Z_{ni}^2] \to \omega^2$ by (iv) and satisfies the Lindeberg condition by (v). The Lindeberg--Feller CLT and Slutsky's theorem give the result.

\textbf{Coordinate 2.} Lemma~\ref{lem-hatQ} gives $X' M_{K_n} X / (nQ_{K_n}) \to_p 1$.

\textbf{Coordinate 3.} By \eqref{eq-residual}, $\widehat u' \widehat u = u'(M_{K_n} - H_X) u = u' M_{K_n} u - u' H_X u$. Write $u' M_{K_n} u = u' u - u' P_{K_n} u$. The LLN gives $n^{-1} u' u \to_p \sigma^2$; the quadratic form $u' P_{K_n} u$ has mean $\sigma^2 d_{K_n}$ and variance $\leq C d_{K_n}$ by \citet[Theorem 2.1]{deJong}, so $(u' M_{K_n} u)/(n - d_{K_n}) \to_p \sigma^2$ by Chebyshev. For the additional term, $u' H_X u = (\tX_{K_n}' u)^2 /(X' M_{K_n} X)$ is a rank-one quadratic form with mean $\sigma^2$ and variance $O(\sigma^4)$ under independent errors, so $u' H_X u = O_p(1)$ and after dividing by $n - d_{K_n} - 1 = n(1 - \rho) + O(1)$ contributes $o_p(1)$. Combining: $\widehat u' \widehat u / (n - d_{K_n} - 1) \to_p \sigma^2$.

\textbf{Joint convergence.} Coordinates 2 and 3 are deterministic in the limit; a Cram\'er--Wold argument applied to coordinate 1 yields joint convergence.
\end{proof}

\begin{corollary}[Estimator limit]\label{cor-est}
Under Assumption~\ref{ass-reg}, $\sqrt{nQ_{K_n}} (\hbeta_{K_n} - \beta_0) \Rightarrow N(0, \omega^2)$. Under conditional homoskedasticity, $\omega^2 = \sigma^2$.
\end{corollary}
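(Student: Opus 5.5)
The plan is to read the corollary directly off Theorem~\ref{thm-clt} using the decomposition \eqref{eq-estimator}. Multiplying \eqref{eq-estimator} by $\sqrt{nQ_{K_n}}$ gives
\[
\sqrt{nQ_{K_n}}\,(\hbeta_{K_n} - \beta_0) = \frac{X' M_{K_n} u / \sqrt{nQ_{K_n}}}{X' M_{K_n} X / (nQ_{K_n})}.
\]
This is the ratio of the first coordinate to the second coordinate of the vector in Theorem~\ref{thm-clt}.

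The proof of the first claim then has three steps.
\begin{enumerate}
\item By Theorem~\ref{thm-clt}, the numerator converges in distribution to $\mathcal Z \sim N(0,\omega^2)$.
\item The denominator converges jointly, in probability, to the constant $1$. Since the limit is positive, the denominator is bounded away from zero with probability tending to one, so the ratio is well defined on an event of probability tending to one.
\item The continuous mapping theorem applied to $(a,b)\mapsto a/b$ at $b=1$, or equivalently Slutsky's lemma, yields $\sqrt{nQ_{K_n}}(\hbeta_{K_n}-\beta_0)\Rightarrow \mathcal Z \sim N(0,\omega^2)$.
\end{enumerate}
The third coordinate of Theorem~\ref{thm-clt} plays no role here.

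For the homoskedastic statement, assume $\sigma^2(X_i,\G_{K_n,i}) = \sigma^2$ for all $i$. Then the sum in Assumption~\ref{ass-reg}(iv) reduces to
\[
\sigma^2\, \frac{\sum_i \tX_{K_n,i}^2}{nQ_{K_n}} = \sigma^2\, \frac{X' M_{K_n} X}{nQ_{K_n}},
\]
because $\sum_i \tX_{K_n,i}^2 = \tX_{K_n}'\tX_{K_n} = X' M_{K_n} X$, using that $M_{K_n}$ is symmetric and idempotent. By Lemma~\ref{lem-hatQ}, or coordinate 2 of Theorem~\ref{thm-clt}, this ratio converges in probability to $1$. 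The whole expression therefore converges in probability to $\sigma^2$. Since probability limits are unique, the $\omega^2$ in (iv) must equal $\sigma^2$.

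The only step needing care is the identification of $\omega^2$. Assumption (iv) defines $\omega^2$ through the realized residualized regressors $\tX_{K_n,i}$. The CLT in Theorem~\ref{thm-clt}, by contrast, was run on the population deviations $\xi_i$. So one must check that the homoskedastic reduction uses the same normalization as Lemma~\ref{lem-hatQ}. Under homoskedasticity, $\E[\xi_i^2 u_i^2] = \sigma^2 Q_{K_n}$ exactly, so both routes give $\sigma^2$ and the normalization is consistent. I do not expect any further obstacle: the remaining steps are routine applications of Slutsky's lemma.
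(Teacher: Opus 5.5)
Your route is the paper's own. The paper gives no separate proof and reads the corollary off Theorem~\ref{thm-clt} exactly as you do: the ratio of coordinates 1 and 2 plus Slutsky, then $\omega^2=\sigma^2$ from Assumption~\ref{ass-reg}(iv) and coordinate~2.

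The problem is the step you flag at the end. Your resolution of it is wrong when $\rho>0$, and the same defect sits in the results you cite. Since $P_{K_n}\mu=\mu$, we have $\tX_{K_n}=M_{K_n}\xi$. Conditionally on the design, $\E[X'M_{K_n}X]=\sum_i\bigl(1-(P_{K_n})_{ii}\bigr)\Var(X_i\mid\G_{K_n})$, which equals $(n-d_{K_n})Q_{K_n}$ when $\Var(X_i\mid\G_{K_n})\equiv Q_{K_n}$. In that case $X'M_{K_n}X/(nQ_{K_n})\to_p 1-\rho$, not $1$.

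Lemma~\ref{lem-hatQ} cannot rule this out, for two reasons:
\begin{itemize}
\item Its hypothesis $\max_i(P_{K_n})_{ii}=o_p(1)$ is impossible for $\rho>0$, because $\max_i(P_{K_n})_{ii}\ge\tr(P_{K_n})/n=\rho_n$. Assumption~\ref{ass-reg}(iii) only gives $\max_i(P_{K_n})_{ii}\le\bar h$.
\item Even granting it, $\hQ_{K_n}-Q_{K_n}=o_p(1)$ says nothing about the ratio when $Q_{K_n}\to0$, which must happen whenever $\tau^2<\infty$.
\end{itemize}

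So your two routes do not agree. Under homoskedasticity the $\xi_iu_i$ array has normalized variance $\sigma^2$, while (iv) gives $\omega^2=(1-\rho)\sigma^2$. The difference, $\sigma^2\xi'P_{K_n}\xi/(nQ_{K_n})\to\sigma^2\rho$, is exactly the conditional variance of the remainder $R_n=-\sum_i(P_{K_n}\xi)_iu_i$. The proof of coordinate~1 discards $R_n$ using the bound $\sigma^2\max_i(P_{K_n})_{ii}\,nQ_{K_n}$, but that bound is $O(nQ_{K_n})$ rather than $o(nQ_{K_n})$.

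Take the paper's definition $Q_K=\E[\Var(X\mid\G_K)]$ with $\rho>0$, in the homogeneous case. Your ratio argument, done correctly, gives $\sqrt{nQ_{K_n}}(\hbeta_{K_n}-\beta_0)\Rightarrow N\bigl(0,\omega^2/(1-\rho)^2\bigr)$. Under homoskedasticity this is $N\bigl(0,\sigma^2/(1-\rho)\bigr)$, which matches the exact $\Var(\hbeta_{K_n}\mid X,D)=\sigma^2/X'M_{K_n}X$.

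Your proof is therefore valid as written only in two cases:
\begin{itemize}
\item $\rho=0$; or
\item after renormalizing by the expected within variance $\bar Q_n:=n^{-1}\E[X'M_{K_n}X]$. The simulation appendix does this implicitly: there it sets $Q_K=\sigma_\eta^2(1-1/N-1/T+1/(NT))=\sigma_\eta^2(1-\rho_n)$, although $\Var(X\mid\G_K)=\sigma_\eta^2$.
\end{itemize}

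With that normalization your three steps go through, with two changes. First, prove the numerator CLT for $\tX_{K_n,i}u_i$ conditionally on $(X,D)$, not through $\xi_iu_i$. The conditional variance comes from (iv), and the conditional Lindeberg condition holds in probability from (v) via Markov's inequality. Second, obtain coordinate~2 from a variance bound on the quadratic form $\xi'M_{K_n}\xi$ instead of from Lemma~\ref{lem-hatQ}. Condition (iv) then gives $\omega^2=\sigma^2$ under homoskedasticity exactly as you argue.
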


\section{Variance Estimation Under the Drift}\label{sec-var}

\subsection{Homoskedastic inference: naive vs CJN}\label{sec-homo}

Let $\hsigma^2_{\mathrm{naive}} = \widehat u' \widehat u / n$ and $\hsigma^2_{\mathrm{CJN}} = \widehat u' \widehat u / (n - d_{K_n} - 1)$, with corresponding $t$-statistics $T_n^{(j)} = (\hbeta_{K_n} - \beta_0)/(\hsigma^{(j)}/\sqrt{X' M_{K_n} X})$.

\begin{theorem}[Size of conventional $t$-test]\label{thm-size}
Under Assumption~\ref{ass-reg} with conditional homoskedasticity, and under $H_0: \beta = \beta_0$,
\[
T_n^{\mathrm{naive}} \Rightarrow N\!\left(0, \frac{1}{1 - \rho}\right), \qquad T_n^{\mathrm{CJN}} \Rightarrow N(0, 1).
\]
The asymptotic size of the nominal-$\alpha$ test using $T^{\mathrm{naive}}$ is $S_{\mathrm{naive}}(\rho) = 2(1 - \Phi(z_{1-\alpha/2} \sqrt{1-\rho}))$, strictly exceeding $\alpha$ for any $\rho > 0$.
\end{theorem}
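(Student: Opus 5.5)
The plan is to write each $t$-statistic as a ratio of quantities whose joint limit is already given by Theorem~\ref{thm-clt}, and then apply the continuous mapping theorem. By \eqref{eq-estimator},
\[
T_n^{(j)} = \frac{X' M_{K_n} u / \sqrt{X' M_{K_n} X}}{\hsigma^{(j)}}
= \frac{X' M_{K_n} u/\sqrt{nQ_{K_n}}}{\sqrt{X' M_{K_n} X/(nQ_{K_n})}\;\hsigma^{(j)}} .
\]
The numerator converges to $\mathcal Z \sim N(0,\omega^2)$ by coordinate~1 of Theorem~\ref{thm-clt}. Under conditional homoskedasticity, $\sigma^2(X_i,\G_{K_n,i}) = \sigma^2$, so Assumption~\ref{ass-reg}(iv) gives $\omega^2 = \lim (nQ_{K_n})^{-1}\tX_{K_n}'\tX_{K_n}\sigma^2 = \sigma^2$, using coordinate~2. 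This is the content of Corollary~\ref{cor-est}.

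For the CJN statistic, coordinate~3 gives $\hsigma^2_{\mathrm{CJN}} \to_p \sigma^2$. Coordinate~2 gives $X'M_{K_n}X/(nQ_{K_n}) \to_p 1$. Joint convergence, together with continuity of $(z,a,s)\mapsto z/\sqrt{as}$ at points with $a=1$ and $s=\sigma^2>0$, then yields $T_n^{\mathrm{CJN}} \Rightarrow \mathcal Z/\sigma \sim N(0,1)$.

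For the naive statistic, I will use the identity
\[
\hsigma^2_{\mathrm{naive}} = \frac{n-d_{K_n}-1}{n}\,\hsigma^2_{\mathrm{CJN}} ,
\]
whose deterministic factor is $1-\rho_n - 1/n \to 1-\rho$ by \eqref{eq-drift}. Hence $\hsigma^2_{\mathrm{naive}} \to_p (1-\rho)\sigma^2$, and so $T_n^{\mathrm{naive}} = T_n^{\mathrm{CJN}}\cdot (1-\rho_n-1/n)^{-1/2} \Rightarrow N(0,1/(1-\rho))$ by Slutsky's theorem. Since $\rho<1$, the limit of the scale factor is finite and positive, so no degeneracy arises.

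For the size formula, the test rejects when $|T_n^{\mathrm{naive}}| > z_{1-\alpha/2}$. The limit law is continuous, so the Portmanteau theorem gives
\[
\Pp\big(|T_n^{\mathrm{naive}}|>z_{1-\alpha/2}\big) \to \Pp\big(|N(0,1)| > z_{1-\alpha/2}\sqrt{1-\rho}\big) = 2\big(1-\Phi(z_{1-\alpha/2}\sqrt{1-\rho})\big).
\]
For $\rho\in(0,1)$, $\sqrt{1-\rho}<1$. Because $\Phi$ is strictly increasing, this gives $S_{\mathrm{naive}}(\rho) > 2(1-\Phi(z_{1-\alpha/2})) = \alpha$.

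The main obstacle is not in this argument but in the inputs it takes from Theorem~\ref{thm-clt}, chiefly the consistency of $\hsigma^2_{\mathrm{CJN}}$ when $d_{K_n}$ grows proportionally with $n$. I take that result as given. The one point to check here is that the identification $\omega^2=\sigma^2$ under homoskedasticity uses only coordinate~2. This remains valid when $\tau^2<\infty$, since everything is normalized by $nQ_{K_n}$ and no condition of the form $\tau^2=\infty$ is needed. This is what makes the result hold uniformly over $\tau^2\in(0,\infty]$.
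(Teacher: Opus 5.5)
Your proposal is correct and follows essentially the same route as the paper: you read the score, Hessian and $\hsigma^2_{\mathrm{CJN}}$ limits off Theorem~\ref{thm-clt}, identify $\omega^2=\sigma^2$ under homoskedasticity, apply Slutsky, and get the naive statistic from the deterministic rescaling $(n-d_{K_n}-1)/n \to 1-\rho$. The only difference is that you spell out the Portmanteau step and the strict monotonicity of $\Phi$ behind the size formula, which the paper leaves implicit.
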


\begin{proof}
By Theorem~\ref{thm-clt}, the score $S_n := (nQ_{K_n})^{-1/2} X' M_{K_n} u \Rightarrow N(0, \sigma^2)$ under homoskedasticity and the Hessian $H_n := (nQ_{K_n})^{-1} X' M_{K_n} X \to_p 1$. Coordinate 3 of Theorem~\ref{thm-clt} gives $\hsigma^2_{\mathrm{CJN}} = \widehat u' \widehat u / (n - d_{K_n} - 1) \to_p \sigma^2$, so Slutsky yields $T_n^{\mathrm{CJN}} \Rightarrow N(0, 1)$. For the naive statistic, $\hsigma^2_{\mathrm{naive}} = ((n - d_{K_n} - 1)/n) \cdot \hsigma^2_{\mathrm{CJN}} \to_p (1-\rho) \sigma^2$, so $T_n^{\mathrm{naive}} \Rightarrow N(0, 1/(1-\rho))$, yielding the displayed size.
\end{proof}

\begin{theorem}[Exact distribution and no $\tau^2$-driven distortion]\label{thm-exact}
Under Assumption~\ref{ass-reg} with conditional homoskedasticity and Gaussian errors, $T_n^{\mathrm{CJN}} \sim t_{n - d_{K_n} - 1}$ exactly, conditional on the FE design and on $X$. The distortion relative to $N(0, 1)$ satisfies
\[
\Pp(|T_n^{\mathrm{CJN}}| > z_{1-\alpha/2}) - \alpha = \frac{(z_{1-\alpha/2}^3 + z_{1-\alpha/2}) \phi(z_{1-\alpha/2})}{2(n - d_{K_n} - 1)} + O(n^{-2}),
\]
which is $O(1/n)$ and does not depend on $\tau_n^2$.
\end{theorem}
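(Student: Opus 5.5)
The plan is to condition throughout on $(X, D_{K_n})$ and use the classical Gaussian linear-model argument. Conditional homoskedasticity with Gaussian errors gives $u \mid X, D \sim N(0, \sigma^2 I_n)$; under strict exogeneity I take this as the working interpretation of the assumption, i.e.\ independent Gaussian errors with common variance given the design. By \eqref{eq-estimator}, $\hbeta_{K_n} - \beta_0 = \tX_{K_n}' u / (X' M_{K_n} X)$. Conditional on the design, this is $N(0, \sigma^2/X'M_{K_n}X)$, because $\tX_{K_n}'\tX_{K_n} = X'M_{K_n}X$ by idempotency of $M_{K_n}$. Hence $Z := (\hbeta_{K_n}-\beta_0)\sqrt{X'M_{K_n}X}/\sigma \sim N(0,1)$ exactly.

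By \eqref{eq-residual}, $\widehat u = (M_{K_n} - H_X)u$. The matrix $M_{K_n} - H_X$ is an orthogonal projection of rank $n - d_{K_n} - 1$, since $\tX_{K_n}$ lies in the range of $M_{K_n}$, so $H_X M_{K_n} = H_X$. The rank count requires $\tX_{K_n} \neq 0$, which holds with probability tending to one by Lemma~\ref{lem-hatQ}, and I condition on that event. Cochran's theorem then gives $\widehat u'\widehat u/\sigma^2 \sim \chi^2_{n-d_{K_n}-1}$. Moreover $\widehat u$ is independent of $\tX_{K_n}'u$, because $(M_{K_n}-H_X)\tX_{K_n} = 0$ makes the two jointly Gaussian and uncorrelated. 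Therefore $T_n^{\mathrm{CJN}} = Z/\sqrt{\chi^2_m/m}$ with $m = n-d_{K_n}-1$, which is exactly $t_m$. This law is free of $X$ and of $\tau_n^2 = nQ_{K_n}$, so the same holds unconditionally.

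For the expansion I will use the standard Cornish--Fisher/Edgeworth expansion of the $t_m$ cdf,
\[
F_{t_m}(z) = \Phi(z) - \frac{(z^3+z)\phi(z)}{4m} + O(m^{-2}).
\]
One way to derive it is to write $\Pp(|t_m|>z) = \E[2(1-\Phi(zV))]$ with $V = \sqrt{\chi^2_m/m}$. Expanding around $V=1$, using $\E V - 1 = -1/(4m) + O(m^{-2})$ and $\Var V = 1/(2m) + O(m^{-2})$, together with $\phi'(z) = -z\phi(z)$, gives
\[
\Pp(|T|>z) - \alpha = \frac{2\phi(z)z}{4m} + \frac{\phi(z) z^3}{2m} \cdot \frac{1}{1}\cdot\frac{1}{1} + \ldots
\]
I will carefully collect terms to obtain $(z^3+z)\phi(z)/(2m)$.

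The main obstacle is controlling the remainder uniformly in the Taylor step, which needs higher moments of $V - 1$ of order $O(m^{-2})$. These are standard for the chi-square distribution. Finally, $m = n(1-\rho)+O(1)$ under Assumption~\ref{ass-reg}(ii) with $\rho<1$, so $O(m^{-2}) = O(n^{-2})$ and the leading term is $O(1/n)$. None of these quantities involve $\tau_n^2$.
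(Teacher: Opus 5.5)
Your proposal is correct and follows the paper's proof. The shared steps are: condition on $(X, D_{K_n})$ to get an exactly normal numerator; apply Cochran's theorem to $\widehat u'\widehat u = u'(M_{K_n}-H_X)u$; get independence from $(M_{K_n}-H_X)\tX_{K_n}=0$; and conclude $T_n^{\mathrm{CJN}}\sim t_m$ with $m=n-d_{K_n}-1$. The only difference is that you derive the tail expansion yourself by Taylor-expanding $\E[2(1-\Phi(zV))]$, where the paper cites Abramowitz--Stegun. Your two displayed terms $z\phi(z)/(2m)$ and $z^3\phi(z)/(2m)$ already sum to the stated constant. The $O(m^{-2})$ remainder follows by expanding to third order, since the signed third moment of $V-1$ is $O(m^{-2})$ even though $\E|V-1|^3$ is only $O(m^{-3/2})$, and then bounding the fourth-order term with the bounded fourth derivative.
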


\begin{proof}
Under $H_0: \beta = \beta_0$, the FWL representation gives $\hbeta_{K_n} - \beta_0 = (\tX_{K_n}' \tX_{K_n})^{-1} \tX_{K_n}' u$ and $\widehat u = (M_{K_n} - H_X) u$, where $H_X$ is the rank-one projection onto $\tX_{K_n}$. Conditional on $(D_{K_n}, X)$, the linear form $\tX_{K_n}' u \sim N(0, \sigma^2 \tX_{K_n}' \tX_{K_n})$. The quadratic form $\widehat u' \widehat u = u' (M_{K_n} - H_X) u$ is $\sigma^2$ times a $\chi^2_{n - d_{K_n} - 1}$ random variable, since $M_{K_n} - H_X$ is an orthogonal projection of rank $n - d_{K_n} - 1$ (Cochran's theorem). The linear and quadratic forms are independent: the linear form is determined by the projection of $u$ onto the rank-one space spanned by $\tX_{K_n}$, while the quadratic form is determined by the projection onto the orthogonal complement of $\mathrm{span}(D_{K_n}, X)$, and these subspaces are orthogonal under independent Gaussian $u$ \citep[Theorem~3.4.1]{Wool}. Therefore $T_n^{\mathrm{CJN}}$ is the ratio of a standard normal to $\sqrt{\chi^2_{n - d_{K_n} - 1}/(n - d_{K_n} - 1)}$, which is exactly $t_{n - d_{K_n} - 1}$. The Fisher--Cornish expansion \citep[26.7.8]{AS72} gives the displayed size distortion.
\end{proof}

\begin{remark}[The broken IV analogy]\label{rem-no-iv}
Theorem~\ref{thm-exact} delivers a perhaps counterintuitive conclusion. Under the baseline assumptions, the CJN $t$-test is asymptotically exact for any $\tau^2 > 0$. The IV-weak-identification analogy fails because two-stage least squares is biased of order $1/\mu^2$ (the first-stage residual enters both numerator and denominator); FE-residualized OLS is unbiased because the within transformation is an exact orthogonal projection. The bias source required for a Stock--Yogo-style threshold must come from outside the baseline model, e.g.\ measurement error in the treatment; the companion paper develops that case. Within the baseline, attention should turn to where distortion does arise: variance estimation under heteroskedasticity.
\end{remark}

\subsection{Heteroskedastic-robust variance: HC0, HC1, HC2 (LO), HC3}\label{sec-het}

Under heteroskedastic errors the limiting score variance $\omega^2$ may differ from $\sigma^2$. The leading estimators of the variance of $X' M_{K_n} u$ are:
\begin{align*}
\hOmega_{\mathrm{HC0}} &= \sum_i \tX_{K_n, i}^2 \widehat u_i^2, &
\hOmega_{\mathrm{HC1}} &= \frac{n}{n - d_{K_n} - 1} \hOmega_{\mathrm{HC0}}, \\
\hOmega_{\mathrm{HC2}} &= \sum_i \frac{\tX_{K_n, i}^2 \widehat u_i^2}{1 - H_{ii}}, &
\hOmega_{\mathrm{HC3}} &= \sum_i \frac{\tX_{K_n, i}^2 \widehat u_i^2}{(1 - H_{ii})^2},
\end{align*}
where $H_{ii} = (P_{K_n})_{ii} + \tX_{K_n, i}^2/(X' M_{K_n} X)$ is the full-regression leverage of observation $i$ defined in Section~\ref{sec-setup-pop}, equation~\eqref{eq-residual}. The $H_{ii}$ are the diagonal entries of the projection onto $\mathrm{span}(D_{K_n}, X)$, so $\sum_i H_{ii} = d_{K_n} + 1$ and the average full leverage is $(d_{K_n} + 1)/n$.

In the scalar-regressor case, the Cattaneo--Jansson--Newey leave-one-out variance estimator $\hOmega_{\mathrm{LO}} = \sum_i \tX_{K_n, i}^2 \widehat u_i \widehat u_i^{(-i)}$, where $\widehat u_i^{(-i)} = \widehat u_i/(1 - H_{ii})$ by Sherman--Morrison, reduces to $\hOmega_{\mathrm{HC2}}$. We use the two labels interchangeably.

\paragraph{Two design quantities.}
Two quantities of the design govern the asymptotic behavior of all four estimators under heteroskedasticity. The first is the target score-variance from Assumption~\ref{ass-reg}(iv),
\begin{equation}\label{eq-omega-def}
\omega^2 := \lim_{n \to \infty} \frac{1}{nQ_{K_n}} \sum_i \tX_{K_n, i}^2 \sigma_i^2,
\end{equation}
the $\tX^2$-weighted average error variance. The second is the \emph{cross-leverage variance},
\begin{equation}\label{eq-mu-def}
\mu := \lim_{n \to \infty} \frac{1}{nQ_{K_n}} \sum_i \tX_{K_n, i}^2 \bar\sigma_i^2, \quad \bar\sigma_i^2 := \frac{1}{(1-H_{ii})H_{ii}} \sum_{j \neq i} A_{ij}^2 \sigma_j^2,
\end{equation}
where $A := M_{K_n} - H_X$ is the projection associated with the residuals. The quantity $\bar\sigma_i^2$ is the average error variance over off-diagonal observations $j \neq i$, weighted by the squared cross-leverage $A_{ij}^2$. The two-quantity structure $(\omega^2, \mu)$ — rather than just $\omega^2$ alone — is what governs the HC limits below.

The \emph{effective design variance} is the convex combination
\begin{equation}\label{eq-omega-eff-def}
\omega^2_{\mathrm{eff}} := (1-\rho)\,\omega^2 + \rho\, \mu,
\end{equation}
with the proportional-FE weight $\rho$ controlling how much of the cross-leverage variance enters. Under conditional homoskedasticity, $\sigma_i^2 \equiv \sigma^2$ forces $\omega^2 = \mu = \sigma^2$ and $\omega^2_{\mathrm{eff}} = \sigma^2$. Under heteroskedasticity, $\omega^2_{\mathrm{eff}}$ equals the target $\omega^2$ exactly when $\mu = \omega^2$; we call this the \emph{cross-leverage balance condition}. The condition is automatic under homoskedasticity and holds in a wider class of ``design-balanced'' heteroskedastic settings discussed in Remark~\ref{rem-omega-eff} below; under generic heteroskedasticity with non-uniform leverage, $\omega^2_{\mathrm{eff}} \neq \omega^2$ and an additional bias of order $\rho |\mu - \omega^2|$ enters every HC estimator. The simulation evidence in Section~\ref{sec-sim} indicates that this bias is small in practice across the designs we test, but the theoretical statement below makes the dependence on $\mu$ explicit.

\begin{theorem}[Asymptotic limits of HC0--HC3 under the drift]\label{thm-hc}
Under Assumption~\ref{ass-reg} with $\E[u_i^4 \mid X, \G_\infty]$ uniformly bounded and a FE design with asymptotically-uniform full leverage $\max_i |H_{ii} - \rho_n| = o_p(1)$,
\begin{align*}
\hOmega_{\mathrm{HC0}} / (nQ_{K_n}) &\to_p (1-\rho)\,\omega^2_{\mathrm{eff}}, \\
\hOmega_{\mathrm{HC1}} / (nQ_{K_n}) &\to_p \omega^2_{\mathrm{eff}}, \\
\hOmega_{\mathrm{HC2}} / (nQ_{K_n}) &\to_p \omega^2_{\mathrm{eff}}, \\
\hOmega_{\mathrm{HC3}} / (nQ_{K_n}) &\to_p \omega^2_{\mathrm{eff}} / (1 - \rho),
\end{align*}
where $\omega^2_{\mathrm{eff}}$ is defined in \eqref{eq-omega-eff-def} and $\mu_n := (nQ_{K_n})^{-1} \sum_i \tX_{K_n, i}^2 \bar\sigma_i^2 \to_p \mu$ is assumed to exist. Under conditional homoskedasticity ($\omega^2 = \mu = \sigma^2$), the four limits become $(1-\rho)\sigma^2, \sigma^2, \sigma^2, \sigma^2/(1-\rho)$ respectively.
\end{theorem}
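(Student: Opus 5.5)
The plan is to reduce all four statements to a single one about HC0 and then account for the leverage weights. Under the uniform-leverage hypothesis $\max_i|H_{ii}-\rho_n| = o_p(1)$ with $\rho_n\to\rho<1$, we have $\min_i(1-H_{ii})$ bounded away from zero with probability approaching one. Hence
\[
\Bigl|\frac{\hOmega_{\mathrm{HC2}}}{nQ_{K_n}} - \frac{1}{1-\rho_n}\,\frac{\hOmega_{\mathrm{HC0}}}{nQ_{K_n}}\Bigr| \le \max_i\Bigl|\frac{1}{1-H_{ii}}-\frac{1}{1-\rho_n}\Bigr|\cdot\frac{\hOmega_{\mathrm{HC0}}}{nQ_{K_n}} = o_p(1)\cdot O_p(1),
\]
and the same holds for HC3 with $(1-\rho_n)^{-2}$. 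For HC1, the factor $n/(n-d_{K_n}-1)\to 1/(1-\rho)$ is deterministic. So everything follows once I show $\hOmega_{\mathrm{HC0}}/(nQ_{K_n}) \to_p (1-\rho)\,\omega^2_{\mathrm{eff}}$. Note that the HC2 and HC3 limits then come out as $\omega^2_{\mathrm{eff}}$ and $\omega^2_{\mathrm{eff}}/(1-\rho)$, as stated.

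For HC0 I write $\widehat u_i = (Au)_i = A_{ii}u_i + \sum_{j\ne i}A_{ij}u_j$, where $A=M_{K_n}-H_X$ and $A_{ii}=1-H_{ii}$. Expanding the square gives
\[
\hOmega_{\mathrm{HC0}}=\sum_i \tX_i^2\widehat u_i^2 = T_1+T_2+T_3,
\]
with $T_1=\sum_i\tX_i^2A_{ii}^2u_i^2$, $T_2=\sum_i\tX_i^2\bigl(\sum_{j\ne i}A_{ij}u_j\bigr)^2$, and $T_3$ the cross terms $2\sum_i\tX_i^2A_{ii}u_i\sum_{j\ne i}A_{ij}u_j$.

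The argument then proceeds in three steps, all conditional on $(X,D)$ and using $\E[u\mid X,\G_\infty]=0$ with independence across $i$.

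\textbf{Step 1 ($T_1$).} Since $A_{ii}^2=(1-\rho)^2+o_p(1)$ uniformly, and since $(nQ)^{-1}\sum_i\tX_i^2u_i^2 \to_p \omega^2$ (a weak LLN using Assumption~\ref{ass-reg}(iii)--(iv) and the bounded conditional fourth moment), we get $T_1/(nQ) \to_p (1-\rho)^2\omega^2$.

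\textbf{Step 2 ($T_2$).} The conditional mean of $T_2$ is $\sum_i\tX_i^2\sum_{j\ne i}A_{ij}^2\sigma_j^2 = \sum_i\tX_i^2(1-H_{ii})H_{ii}\bar\sigma_i^2$. The identity $\sum_{j\ne i}A_{ij}^2 = A_{ii}-A_{ii}^2 = (1-H_{ii})H_{ii}$ is exactly why the normalization in \eqref{eq-mu-def} is chosen. Uniform leverage then gives $\E[T_2\mid X,D]/(nQ) \to_p (1-\rho)\rho\,\mu$.

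\textbf{Step 3 ($T_3$).} $T_3$ has conditional mean zero. Combining, $(T_1+T_2)/(nQ) \to (1-\rho)[(1-\rho)\omega^2+\rho\mu] = (1-\rho)\,\omega^2_{\mathrm{eff}}$, as required.

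The main obstacle is the variance control of $T_2$ and $T_3$: showing $\Var(T_2\mid X,D)$ and $\Var(T_3\mid X,D)$ are $o((nQ)^2)$. Both are quadratic forms $u'Bu$ in independent errors. For $T_2$, $B=A\,\mathrm{diag}(\tX^2)\,A$ with its diagonal removed in the $j\ne i$ sense; for $T_3$, $B$ has entries $\tX_i^2A_{ii}A_{ij}$. I would bound their variances by $C\bigl(\tr(B^2)+\sum_i B_{ii}^2\bigr)\max_i\E[u_i^4\mid\cdot]$, via the quadratic-form moment bound of \citet[Theorem 2.1]{deJong} already used in Theorem~\ref{thm-clt}. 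Since $A$ is a projection, $\|A\|_{\mathrm{op}}\le 1$, so
\[
\tr(B^2)\le \sum_i \tX_i^4 \le \max_i\tX_i^2\cdot\sum_i\tX_i^2 = o_p(nQ)\cdot O_p(nQ)
\]
by Assumption~\ref{ass-reg}(iii) and Lemma~\ref{lem-hatQ}. This is the one place where the condition $\max_i\tX_i^2/(nQ)=o_p(1)$ does the real work. The remaining detail is to check that the off-diagonal removal (the $j\ne i$ restriction) does not spoil the operator-norm bound. Triangle inequality against the diagonal part handles this, because $|A_{ij}|\le 1$ and the diagonal part is bounded in the same way. Chebyshev then gives convergence in probability, and the existence assumption $\mu_n\to_p\mu$ closes the argument.
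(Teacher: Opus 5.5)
Your proof is correct. It uses the same ingredients as the paper: conditioning on $(X,D)$, the de Jong variance bound for quadratic forms in independent errors, the identity $\sum_{j\neq i}A_{ij}^2=(1-H_{ii})H_{ii}$, and substituting $\rho_n$ for $H_{ii}$ under uniform leverage. The organization differs in two ways.

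First, the paper does not reduce HC2 and HC3 to HC0. It proves one concentration lemma for arbitrary nonnegative weights with $\max_i w_i=o_p(nQ_{K_n})$ and $\sum_i w_i=O_p(nQ_{K_n})$. It then applies that lemma to $\tX_i^2$, $\tX_i^2/(1-H_{ii})$ and $\tX_i^2/(1-H_{ii})^2$ in turn. Your sup-norm reduction is equally valid and shows explicitly that, under uniform leverage, HC2 and HC3 are asymptotically deterministic rescalings of HC0.

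Second, the paper never splits $\widehat u_i^2$ into own-error, other-error and cross pieces at the random level. It treats $\sum_i w_i\widehat u_i^2=u'AD_wAu$ as a single quadratic form. It bounds both $\sum_i B_{ii}^2$ and $\tr(B^2)$ by $(\max_i w_i)(\sum_i w_i)$, using only $\sum_j A_{ij}^2=A_{ii}\le 1$. Only afterwards does it split the conditional mean $\sum_j A_{ij}^2\sigma_j^2$ into the $\omega^2$ and $\mu$ parts. That route makes your $T_3$ and the truncated matrix $A-\diag(A)$ unnecessary, so the operator-norm detail you flag as the main obstacle never arises. In exchange, your split shows which random terms produce $\omega^2$ (own errors) and which produce $\mu$ (other observations' errors).

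One point to state explicitly: in Step 2, replacing $(1-H_{ii})H_{ii}$ by $(1-\rho_n)\rho_n$ needs $\bar\sigma_i^2$ to be uniformly bounded. This holds because $\bar\sigma_i^2$ is a weighted average of the $\sigma_j^2$, which are bounded by the bounded conditional fourth-moment hypothesis.
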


The proof is built from a concentration lemma and a conditional-expectation expansion, both of independent interest.

\begin{lemma}[Concentration of weighted residual sums]\label{lem-conc}
Under the conditions of Theorem~\ref{thm-hc}, for any nonnegative weights $w = (w_1, \ldots, w_n)$ with $\max_i w_i = o_p(nQ_{K_n})$ and $\sum_i w_i = O_p(nQ_{K_n})$,
\[
\frac{1}{nQ_{K_n}} \sum_i w_i \big[\widehat u_i^2 - \E[\widehat u_i^2 \mid X, D]\big] = o_p(1).
\]
\end{lemma}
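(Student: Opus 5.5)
Since $\widehat u = Au$ with $A = M_{K_n} - H_X$ a function of $(X,D)$ only, I will work conditionally on $(X,D)$ throughout. Under strict exogeneity and independence across $i$, the errors are then independent with mean zero, variances $\sigma_i^2$, and uniformly bounded fourth moments. The weights $w_i$ are functions of $(X,D)$ as well, since the applications use $w_i = \tX_{K_n,i}^2$ times functions of $H_{ii}$. The centered sum is therefore a centered quadratic form in $u$:
\[
\sum_i w_i\big[\widehat u_i^2 - \E[\widehat u_i^2\mid X,D]\big] = u' B u - \E[u'Bu\mid X,D], \qquad B := A W A,\quad W := \diag(w_1,\dots,w_n).
\]
Writing $\Delta_n$ for this quantity, the plan is to bound $\Var(\Delta_n \mid X,D)$ and conclude with the conditional Chebyshev inequality: $\Pp(|\Delta_n| > \epsilon\, nQ_{K_n} \mid X,D) \le \Var(\Delta_n\mid X,D)/(\epsilon\, nQ_{K_n})^2$. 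Taking expectations and using dominated convergence (the conditional probability is bounded by one) then gives the unconditional $o_p(1)$ statement.

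The variance of a quadratic form in independent mean-zero variables with bounded fourth moments is standard, in the same spirit as the \citet{deJong} bound used in Theorem~\ref{thm-clt}:
\[
\Var(u'Bu \mid X,D) \le C\Big(\sum_i B_{ii}^2 + \sum_{i,j} B_{ij}^2\Big) \le 2C \tr(B^2),
\]
with $C$ depending on $\sup \E[u_i^4\mid X,\G_\infty]$. It remains to bound $\tr(B^2)$. Because $A$ is an orthogonal projection, $A^2 = A$ and $\|A\|_{\mathrm{op}}\le 1$, so
\[
\tr(B^2) = \tr(AWAW) \le \tr(W^2) = \sum_i w_i^2 \le \max_i w_i \cdot \sum_i w_i.
\]
The middle inequality holds since $\tr(AWAW) = \|W^{1/2}AW^{1/2}\|_F^2$ and $\|W^{1/2}AW^{1/2}\|_F \le \|W\|_F$.

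By the two hypotheses on the weights, $\max_i w_i \cdot \sum_i w_i = o_p(nQ_{K_n})\cdot O_p(nQ_{K_n}) = o_p((nQ_{K_n})^2)$. Hence $\Var(\Delta_n\mid X,D)/(nQ_{K_n})^2 = o_p(1)$, and the conditional Chebyshev step gives $\Pp(|\Delta_n| > \epsilon\, nQ_{K_n}\mid X,D) \le \min\{1, o_p(1)\} \to_p 0$. The same dominated-convergence argument as above converts this into the unconditional conclusion.

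The main obstacle is that the weight conditions hold only in probability and the weights are random. I handle this by keeping everything conditional on $(X,D)$, as above, rather than taking unconditional moments of $\Delta_n$. This also avoids needing independence between $u$ and the design beyond $\E[u_i\mid X_i,\G_\infty]=0$. One point must be checked: conditional on $(X,D)$, the $u_i$ remain independent with mean zero and bounded fourth moments. This follows from the i.i.d.\ structure in Assumption~\ref{ass-reg}(i) together with the assumed uniform bound on $\E[u_i^4\mid X,\G_\infty]$.
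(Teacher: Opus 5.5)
Your proof is correct and follows the paper's route. Like the paper, you write the centered sum as $u'Bu - \E[u'Bu \mid X, D]$ with $B = AWA$, bound its conditional variance by a de~Jong-type quadratic-form inequality, show the bound is at most $\max_i w_i \cdot \sum_i w_i = o_p((nQ_{K_n})^2)$, and conclude by Chebyshev. The differences are minor and in your favor: you absorb $\sum_i B_{ii}^2$ into $\tr(B^2)$ and use the contraction bound $\tr(AWAW) \le \tr(W^2)$ instead of the paper's entrywise idempotence argument $\sum_j A_{ij}^2 = A_{ii}$, and you make explicit three things the paper leaves implicit: the conditional-on-$(X,D)$ fourth-moment bounds, the $(X,D)$-measurability of the weights, and the bounded-convergence step from the conditional to the unconditional $o_p(1)$ statement.
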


\begin{proof}
Write $\widehat u_i = (Au)_i$ with $A = M_{K_n} - H_X$ symmetric idempotent of rank $n - d_{K_n} - 1$, so the weighted sum is the quadratic form $u' B u$ with $B := A D_w A$ and $D_w := \diag(w)$. Under $X, D$, the variance of $u'Bu$ is bounded (e.g.\ \citet[Theorem 2.1]{deJong}) by
\[
\Var(u'Bu \mid X, D) \leq C_1 \mu_4 \sum_i B_{ii}^2 + C_2 \mu_2^2 \tr(B^2),
\]
where $\mu_2 = \E[u^2]$, $\mu_4 = \E[u^4]$, $C_1, C_2$ are absolute constants. We bound each term.

For the diagonal: $B_{ii} = \sum_j A_{ij}^2 w_j \leq (\max_j w_j) \cdot A_{ii} \leq \max_j w_j$, using $\sum_j A_{ij}^2 = A_{ii}$ (idempotence of $A$) and $A_{ii} = 1 - H_{ii} \leq 1$. Therefore $\sum_i B_{ii}^2 \leq (\max_j w_j) \cdot \sum_i B_{ii} = (\max_j w_j) \cdot \tr(B)$, and $\tr(B) = \tr(D_w A^2) = \tr(D_w A) = \sum_j w_j A_{jj} \leq \sum_j w_j$. So
\[
\sum_i B_{ii}^2 \leq (\max_j w_j)(\sum_j w_j) = o_p((nQ_{K_n})^2).
\]

For $\tr(B^2)$: by symmetry of $A$ and idempotence,
\begin{align*}
\tr(B^2) &= \tr((A D_w A)^2) = \tr((D_w A)^2) = \sum_{i,j} w_i A_{ij}^2 w_j \\
&\leq (\max_j w_j) \sum_i w_i A_{ii} \leq (\max_j w_j) \sum_i w_i = o_p((nQ_{K_n})^2).
\end{align*}

Combining: $\Var(u'Bu \mid X, D) = o_p((nQ_{K_n})^2)$. By Chebyshev's inequality, $u'Bu - \E[u'Bu \mid X, D] = o_p(nQ_{K_n})$, which is the claim after dividing by $nQ_{K_n}$.
\end{proof}

\begin{lemma}[Conditional expectation of squared residuals]\label{lem-condexp}
Under Assumption~\ref{ass-reg} with independent errors $u_i$,
\begin{equation}\label{eq-condexp}
\E[\widehat u_i^2 \mid X, D] = (1 - H_{ii})^2 \sigma_i^2 + (1 - H_{ii}) H_{ii} \bar\sigma_i^2 = (1-H_{ii})\big[(1-H_{ii})\sigma_i^2 + H_{ii} \bar\sigma_i^2\big],
\end{equation}
where $\bar\sigma_i^2$ is defined in Theorem~\ref{thm-hc}.
\end{lemma}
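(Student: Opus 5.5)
The identity is a direct computation with the residual-maker $A = M_{K_n} - H_X$ from \eqref{eq-residual}; it needs no asymptotics. By \eqref{eq-residual}, $\widehat u_i = (Au)_i = \sum_j A_{ij} u_j$, where $A$ depends only on $(X, D)$. I will work conditionally on $(X, D)$ and on $\G_\infty$. There the errors are independent, mean zero by \eqref{eq-model}, and have variances $\sigma_j^2$. Squaring and taking conditional expectations, the cross terms $A_{ij}A_{ik}\E[u_j u_k \mid X, D]$ with $j \neq k$ vanish by independence and the zero conditional mean. This leaves
\[
\E[\widehat u_i^2 \mid X, D] = A_{ii}^2 \sigma_i^2 + \sum_{j \neq i} A_{ij}^2 \sigma_j^2 .
\]

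Two facts finish the argument. First, $A_{ii} = 1 - H_{ii}$. Since $\tX_{K_n} = M_{K_n} X$ lies in the range of $M_{K_n}$, the matrix $H_X$ projects onto a subspace of that range, so $P_{K_n} + H_X$ is the orthogonal projection onto $\mathrm{span}(D_{K_n}, X)$. Its diagonal is $H_{ii} = (P_{K_n})_{ii} + \tX_{K_n,i}^2/(X'M_{K_n}X)$, and $A = I - (P_{K_n} + H_X)$. So the first term is $(1-H_{ii})^2\sigma_i^2$. Second, the cross sum is just the definition in \eqref{eq-mu-def} rearranged: $\sum_{j\neq i} A_{ij}^2\sigma_j^2 = (1-H_{ii})H_{ii}\,\bar\sigma_i^2$. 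Factoring out $(1-H_{ii})$ then gives the second form in \eqref{eq-condexp}.

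The only point needing care is the justification for the weighting. Idempotence and symmetry of $A$ give $\sum_j A_{ij}^2 = A_{ii}$, so $\sum_{j \neq i} A_{ij}^2 = A_{ii} - A_{ii}^2 = (1-H_{ii})H_{ii}$. The normalization in $\bar\sigma_i^2$ is therefore exactly the total off-diagonal weight, and $\bar\sigma_i^2$ is a genuine weighted average of the $\sigma_j^2$, $j \neq i$. This is what makes it reduce to $\sigma^2$ under homoskedasticity.

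I also need to make sure that the $\sigma_j^2$ appearing here are the conditional variances $\E[u_j^2 \mid X_j, \G_{K,j}]$ used in the paper. Under the i.i.d.\ sampling of Assumption~\ref{ass-reg}(i), conditioning on the full design $(X, D)$ reduces to conditioning on unit $j$'s own $(X_j, \G_{K_n,j})$. Finally, at the boundary $H_{ii} \in \{0, 1\}$ the cross sum is zero, so the formula holds with any convention for $\bar\sigma_i^2$.
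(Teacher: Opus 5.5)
Your proposal is correct and follows the paper's proof: expand $\widehat u_i = \sum_j A_{ij} u_j$, drop the cross terms by independence and zero conditional mean, then combine $A_{ii} = 1 - H_{ii}$, idempotence ($\sum_{j \neq i} A_{ij}^2 = (1-H_{ii})H_{ii}$), and the definition of $\bar\sigma_i^2$. Your extra steps fill in details the paper leaves implicit: the justification that $P_{K_n} + H_X$ is the projection onto $\mathrm{span}(D_{K_n}, X)$, your note that the identity is really just the definition of $\bar\sigma_i^2$ rearranged (idempotence only makes $\bar\sigma_i^2$ a weighted average), and your remark that the boundary cases $H_{ii} \in \{0,1\}$ are harmless.
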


\begin{proof}
With $\widehat u_i = \sum_j A_{ij} u_j$ and $u_j$ independent with $\E[u_j^2 \mid X, D] = \sigma_j^2$,
\[
\E[\widehat u_i^2 \mid X, D] = \sum_j A_{ij}^2 \sigma_j^2 = A_{ii}^2 \sigma_i^2 + \sum_{j \neq i} A_{ij}^2 \sigma_j^2.
\]
By idempotence of $A$, $\sum_j A_{ij}^2 = A_{ii}$, so $\sum_{j \neq i} A_{ij}^2 = A_{ii} - A_{ii}^2 = (1-H_{ii})H_{ii}$. Combining with $A_{ii} = 1 - H_{ii}$ and the definition of $\bar\sigma_i^2$ yields \eqref{eq-condexp}.
\end{proof}

\begin{proof}[Proof of Theorem~\ref{thm-hc}]
For each $c \in \{0, 2, 3\}$, write the HC$_c$ statistic as a weighted sum of $\widehat u_i^2$:
\[
\hOmega_{\mathrm{HC}_c} = \sum_i w_{ci} \widehat u_i^2, \qquad w_{0,i} := \tX_{K_n, i}^2, \quad w_{2,i} := \frac{\tX_{K_n, i}^2}{1 - H_{ii}}, \quad w_{3,i} := \frac{\tX_{K_n, i}^2}{(1 - H_{ii})^2}.
\]
The weights satisfy $\max_i w_{ci} \leq (1-\rho_n + o_p(1))^{-c} \max_i \tX_{K_n, i}^2 = o_p(nQ_{K_n})$ by Assumption~\ref{ass-reg}(iii) and $\sum_i w_{ci} \leq (1-\rho_n + o_p(1))^{-c} \cdot \tX_{K_n}' \tX_{K_n} = O_p(nQ_{K_n})$ by Lemma~\ref{lem-hatQ}. Lemma~\ref{lem-conc} therefore gives
\[
\frac{\hOmega_{\mathrm{HC}_c}}{nQ_{K_n}} = \frac{1}{nQ_{K_n}} \sum_i w_{ci} \E[\widehat u_i^2 \mid X, D] + o_p(1).
\]

\textbf{Conditional limit.} Substitute Lemma~\ref{lem-condexp} and use asymptotically-uniform leverage $\max_i |H_{ii} - \rho_n| = o_p(1)$, which combined with $\rho_n \to \rho \in [0, 1)$ keeps $H_{ii}$ uniformly bounded away from $1$ in probability:
\begin{align*}
\frac{1}{nQ_{K_n}} \sum_i w_{ci} \E[\widehat u_i^2 \mid X, D]
&= \frac{1}{nQ_{K_n}} \sum_i \frac{\tX_{K_n, i}^2}{(1-H_{ii})^c} \cdot (1-H_{ii})\big[(1-H_{ii})\sigma_i^2 + H_{ii} \bar\sigma_i^2\big] \\
&= (1-\rho_n)^{1-c} \Big[(1-\rho_n) \omega_n^2 + \rho_n \mu_n\Big] + o_p(1),
\end{align*}
where the $o_p(1)$ remainder absorbs the $\tX^2$-weighted deviation of $H_{ii}$ from $\rho_n$: for any continuous $g$ on $[0, 1 - \epsilon]$,
\[
\left|\frac{1}{nQ_{K_n}} \sum_i \tX_i^2 \big(g(H_{ii}) - g(\rho_n)\big)\right| \leq \sup_i |g(H_{ii}) - g(\rho_n)| \cdot \frac{\tX'\tX}{nQ_{K_n}} = o_p(1)
\]
by uniform continuity of $g$ and Lemma~\ref{lem-hatQ}. Here $\omega_n^2 := (nQ_{K_n})^{-1} \sum_i \tX_{K_n, i}^2 \sigma_i^2 \to_p \omega^2$ and $\mu_n := (nQ_{K_n})^{-1} \sum_i \tX_{K_n, i}^2 \bar\sigma_i^2 \to_p \mu$ by hypothesis. Substituting $c = 0, 2, 3$ and using $\rho_n \to \rho$:
\[
\hOmega_{\mathrm{HC0}}/(nQ_{K_n}) \to_p (1-\rho)\omega^2_{\mathrm{eff}}, \quad
\hOmega_{\mathrm{HC2}}/(nQ_{K_n}) \to_p \omega^2_{\mathrm{eff}}, \quad
\hOmega_{\mathrm{HC3}}/(nQ_{K_n}) \to_p \omega^2_{\mathrm{eff}}/(1-\rho).
\]

\textbf{HC1.} By definition $\hOmega_{\mathrm{HC1}} = n/(n - d_{K_n} - 1) \cdot \hOmega_{\mathrm{HC0}} = (1 - \rho_n + O(1/n))^{-1} \hOmega_{\mathrm{HC0}}$, so $\hOmega_{\mathrm{HC1}}/(nQ_{K_n}) \to_p (1-\rho)^{-1} \cdot (1-\rho)\omega^2_{\mathrm{eff}} = \omega^2_{\mathrm{eff}}$.

\textbf{Homoskedastic specialization.} Under $\sigma_i^2 \equiv \sigma^2$ a.s., $\omega_n^2 \to_p \sigma^2$ and (since $\bar\sigma_i^2 \equiv \sigma^2$) $\mu_n \to_p \sigma^2$. Hence $\omega^2_{\mathrm{eff}} = \sigma^2$ and the limits reduce as stated.
\end{proof}

\begin{remark}[When $\omega^2_{\mathrm{eff}} = \omega^2$]\label{rem-omega-eff}
The limit $\omega^2_{\mathrm{eff}}$ equals the target score-variance $\omega^2$ exactly when $\mu = \omega^2$, i.e., when the cross-leverage weighted average of error variances matches the $\tX^2$-weighted average. Sufficient conditions include: (i) conditional homoskedasticity ($\sigma_i^2 = \sigma^2$ for all $i$); (ii) ``design-balanced'' heteroskedasticity in which $\sigma_i^2$ is a function of the FE cell only. Both (i) and (ii) imply $\bar\sigma_i^2 = \sigma_i^2$ on the relevant scale.

A more subtle empirical regularity, documented in Section~\ref{sec-sim}, is that balanced symmetric FE designs with i.i.d.\ sampling within cells preserve $\mu \approx \omega^2$ to within a small finite-sample remainder, \emph{even when} $\sigma_i^2 = f(\tX_{K_n, i})$ depends directly on the within-cell variation that drives $\omega^2$. The mechanism is that the cross-leverage weights $A_{ij}^2$ for $j \neq i$ are dominated by FE-projection terms $h_{ij}^2$ whose structure is determined by the cell layout, not by the within-cell $\tX$ values, and the within-unit (or within-time) sum-to-zero constraint on $\tX$ values induces a positive correlation between $\sigma_j^2$ and the within-cell weights that approximately cancels the naive $\mu \neq \omega^2$ bias. We therefore expect $\omega^2_{\mathrm{eff}} = \omega^2$ to hold approximately in any balanced multi-way FE design under i.i.d.\ sampling, with at most $O(\rho^2)$ deviations visible at moderate $\rho$. When the design is unbalanced (some cells much smaller than others), the constraint correlation is incomplete and visible deviations from $\mu = \omega^2$ can persist; this is the regime in which a multivariate leave-one-out construction \citep{CJN} buys additional consistency.
\end{remark}

\begin{remark}[Aside on HC0 bias under heteroskedasticity]\label{rem-hc0-bias-het}
Combining the previous remark with Theorem~\ref{thm-hc}, in balanced designs with i.i.d.\ sampling we expect HC0/$(nQ_{K_n}) \to_p (1-\rho)\omega^2$ even under heteroskedasticity. The HC0 size distortion $2[1 - \Phi(z_{1-\alpha/2}\sqrt{1-\rho})]$ then holds essentially regardless of the heteroskedasticity pattern, provided the design is symmetric. This robustness of the HC0 bias formula is what allows the simulation evidence to track the homoskedastic prediction even when the DGP is heteroskedastic by construction.
\end{remark}

\begin{remark}[HC1 = HC2 in the limit under uniform leverage]\label{rem-hc1-hc2}
A surprising consequence of Theorem~\ref{thm-hc} is that under uniform leverage, HC1 and HC2 have \emph{the same} asymptotic limit, even though HC1 applies a single global degrees-of-freedom correction while HC2 applies a per-observation leverage adjustment. This is not a coincidence: under uniform $H_{ii} = \rho_n$, the per-observation factor $1/(1-H_{ii})$ is constant and equal to $n/(n-d_K-1)$ in the limit, so the two corrections coincide. Under genuinely unbalanced leverage (some $H_{ii}$ much larger than others), HC2 and HC1 diverge, with HC2 retaining its consistency property (under $\mu = \omega^2$) and HC1 acquiring a bias. The simulation in Section~\ref{sec-sim} uses a balanced TWFE design and finds HC1 and HC2 numerically indistinguishable across all replications, consistent with this prediction.
\end{remark}

\begin{remark}[HC0 bias and HC3 over-correction]\label{rem-hc-bias}
Under homoskedasticity, HC0 underestimates the target variance by a fixed factor $(1-\rho)$, producing $t$-statistics with inflated variance $1/(1-\rho)$ and over-rejection growing in $\rho$. HC3 over-corrects by exactly the inverse factor $1/(1-\rho)$, producing $t$-statistics with variance $(1-\rho)$ and conservative inference. HC2 (equivalently HC1 under uniform leverage) sits at the natural midpoint and is consistent. Under heteroskedasticity with $\mu \neq \omega^2$, all four estimators inherit an additional bias term proportional to $\rho |\mu - \omega^2|$, but the ordering HC0 $<$ HC2 $<$ HC3 in expectation is preserved.
\end{remark}

\begin{corollary}[Size of $t$-tests using HC variants]\label{cor-size-hc}
Under the hypotheses of Theorem~\ref{thm-hc}, the $t$-statistic using $\hOmega_{\mathrm{HC}_c}$ has the asymptotic distribution $T^{\mathrm{HC}_c} \Rightarrow N(0, \omega^2 / L_c)$, where $L_c$ is the limit of $\hOmega_{\mathrm{HC}_c}/(nQ_{K_n})$ from Theorem~\ref{thm-hc}:
\[
T^{\mathrm{HC0}} \Rightarrow N\!\left(0, \tfrac{\omega^2}{(1-\rho)\omega^2_{\mathrm{eff}}}\right), \quad
T^{\mathrm{HC1}}, T^{\mathrm{HC2}} \Rightarrow N\!\left(0, \tfrac{\omega^2}{\omega^2_{\mathrm{eff}}}\right), \quad
T^{\mathrm{HC3}} \Rightarrow N\!\left(0, \tfrac{(1-\rho)\omega^2}{\omega^2_{\mathrm{eff}}}\right).
\]
When $\omega^2_{\mathrm{eff}} = \omega^2$ (in particular, under homoskedasticity), these simplify to
\[
T^{\mathrm{HC0}} \Rightarrow N\!\left(0, \tfrac{1}{1-\rho}\right), \quad
T^{\mathrm{HC1}}, T^{\mathrm{HC2}} \Rightarrow N(0, 1), \quad
T^{\mathrm{HC3}} \Rightarrow N(0, 1 - \rho).
\]
\end{corollary}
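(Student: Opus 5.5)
The argument is a Slutsky combination of Corollary~\ref{cor-est} with Theorem~\ref{thm-hc}. First I fix the form of the HC $t$-statistic. With $\hbeta_{K_n}-\beta_0 = X'M_{K_n}u/X'M_{K_n}X$ and standard error $\hOmega_{\mathrm{HC}_c}^{1/2}/(X'M_{K_n}X)$, we have
\[
T^{\mathrm{HC}_c} = \frac{X'M_{K_n}u}{\sqrt{\hOmega_{\mathrm{HC}_c}}} = \frac{(nQ_{K_n})^{-1/2}X'M_{K_n}u}{\sqrt{\hOmega_{\mathrm{HC}_c}/(nQ_{K_n})}}.
\]
The Hessian cancels exactly in this representation, so the normalization by $nQ_{K_n}$ in the HC limits matches the normalization of the score. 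I will use this rewriting rather than the ratio $\sqrt{nQ_{K_n}}(\hbeta-\beta_0)$, because it makes explicit that only two ingredients are needed: the score limit and the variance-estimator limit.

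Second, I collect those two ingredients. Coordinate~1 of Theorem~\ref{thm-clt} gives $(nQ_{K_n})^{-1/2}X'M_{K_n}u \Rightarrow N(0,\omega^2)$ under Assumption~\ref{ass-reg}. The hypotheses of Theorem~\ref{thm-hc} include that assumption, together with the bounded fourth moment, uniform leverage, and existence of $\mu$. Hence that theorem gives $\hOmega_{\mathrm{HC}_c}/(nQ_{K_n}) \to_p L_c$ for $c\in\{0,1,2,3\}$, with
\[
L_0=(1-\rho)\omega^2_{\mathrm{eff}},\qquad L_1=L_2=\omega^2_{\mathrm{eff}},\qquad L_3=\omega^2_{\mathrm{eff}}/(1-\rho).
\]

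The only step needing care is that $L_c>0$, which Slutsky's theorem and the continuous mapping $x\mapsto x^{-1/2}$ require. We have $\omega^2>0$ by Assumption~\ref{ass-reg}(iv), and $\mu\ge 0$ since $\bar\sigma_i^2\ge0$. Because $1-\rho>0$, it follows that
\[
\omega^2_{\mathrm{eff}}=(1-\rho)\omega^2+\rho\mu \ge (1-\rho)\omega^2>0.
\]
Each $L_c$ is therefore positive and finite.

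Slutsky's theorem then gives $T^{\mathrm{HC}_c}\Rightarrow N(0,\omega^2/L_c)$. Substituting the four values of $L_c$ yields the first display. Setting $\omega^2_{\mathrm{eff}}=\omega^2$ makes the ratios equal to $1/(1-\rho)$, $1$, $1$, and $1-\rho$, which is the second display. Under homoskedasticity this case applies by the specialization at the end of Theorem~\ref{thm-hc}, where $\omega^2=\mu=\sigma^2$, consistent with Corollary~\ref{cor-est}.
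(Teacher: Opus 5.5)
Your proposal is correct and matches the paper's proof: both write $T^{\mathrm{HC}_c}$ exactly as $(nQ_{K_n})^{-1/2}X'M_{K_n}u \big/ \sqrt{\hOmega_{\mathrm{HC}_c}/(nQ_{K_n})}$ and then combine coordinate~1 of Theorem~\ref{thm-clt} with Theorem~\ref{thm-hc} via Slutsky's theorem. Your check that each $L_c>0$, via $\omega^2_{\mathrm{eff}}\ge(1-\rho)\omega^2>0$, is needed for the continuous-mapping step and is a detail the paper leaves implicit.
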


\begin{proof}
$T^{\mathrm{HC}_c} = S_n / \sqrt{\hOmega_{\mathrm{HC}_c}/(nQ_{K_n})}$ where $S_n \Rightarrow N(0, \omega^2)$ by Theorem~\ref{thm-clt}. Slutsky with Theorem~\ref{thm-hc} yields the displayed limits.
\end{proof}

\begin{remark}[Practical recommendation]\label{rem-practical}
HC2/LO and HC1 are the only choices among the four that deliver asymptotic exactness for $T \Rightarrow N(0, 1)$ when $\omega^2_{\mathrm{eff}} = \omega^2$, and the only choices whose limit distribution does not depend on $\rho$ explicitly. HC0 over-rejects with size growing in $\rho$; HC3 under-rejects with size shrinking in $\rho$. Under uniform leverage, HC1 and HC2 are numerically indistinguishable, so HC1 is a computationally trivial alternative to HC2/LO for balanced designs; under unbalanced leverage HC2/LO is preferred. The leave-one-out estimator should be the default for inference in saturated FE specifications. The simulation evidence in Section~\ref{sec-sim} confirms that the asymptotic predictions are accurate at standard sample sizes.
\end{remark}

\section{Simulation Evidence}\label{sec-sim}

Appendix~\ref{oa-sim} reports four Monte Carlo experiments validating the theoretical predictions. All four use balanced two-way fixed-effect panels with $X_{it} = \alpha_i + \gamma_t + \sigma_\eta\eta_{it}$ and $Y_{it} = \beta_0 X_{it} + a_i + b_t + u_{it}$, with $\beta_0 = 0$, $\sigma_\eta$ tuned to target $\tau^2 \in \{10, 100\}$, $n = 2000$, and $R = 5000$ replications per design cell. Q2 (naive vs.\ CJN $t$-test under homoskedasticity) confirms that the residual-DOF-corrected $t$-test has empirical size at the nominal $5\%$ level for $\rho \in \{0.05, 0.10, 0.25, 0.50\}$, while naive (uncorrected) tests over-reject at $\rho = 0.5$ with empirical size $9.5\%$. Q4 (within-$\widetilde X$ heteroskedasticity) and Q5 (one-way unbalanced FE producing non-uniform leverage) extend the variance-estimator predictions to settings the homoskedastic Q3 DGP does not probe. Across every cell of every experiment, empirical size is statistically indistinguishable from its value at $\tau^2 = 10$ vs $\tau^2 = 100$, confirming that the studentized test statistic is $\tau^2$-free: there is no Stock--Yogo-style threshold to enforce under strict exogeneity.

Table~\ref{tab-hc-main} reports the headline experiment, Q3 (HC variants under heteroskedasticity), which validates the HC hierarchy of Theorem~\ref{thm-hc} across the full saturation range.

\begin{table}[ht]\centering
\caption{Empirical size of nominal-5\% $t$-tests with heteroskedastic Gaussian errors, using HC0, HC1, HC3, and LO (= HC2) variance estimators. Monte Carlo SEs in parentheses. $n = 2000$, $R = 5000$ replications per cell.}\label{tab-hc-main}
\footnotesize
\setlength{\tabcolsep}{5pt}
\begin{tabular}{rrrrrr}
\toprule
$\rho$ & $\tau^2_{\mathrm{nom}}$ & size\textsubscript{HC0} & size\textsubscript{HC1} & size\textsubscript{HC3} & size\textsubscript{LO} \\
\midrule
0.052 & 10  & 0.062 (0.003) & 0.054 (0.003) & 0.048 (0.003) & 0.054 (0.003) \\
0.052 & 100 & 0.054 (0.003) & 0.050 (0.003) & 0.043 (0.003) & 0.050 (0.003) \\
0.104 & 10  & 0.055 (0.003) & 0.044 (0.003) & 0.034 (0.003) & 0.044 (0.003) \\
0.104 & 100 & 0.059 (0.003) & 0.045 (0.003) & 0.035 (0.003) & 0.045 (0.003) \\
0.252 & 10  & 0.086 (0.004) & 0.049 (0.003) & 0.022 (0.002) & 0.049 (0.003) \\
0.252 & 100 & 0.091 (0.004) & 0.050 (0.003) & 0.021 (0.002) & 0.050 (0.003) \\
0.500 & 10  & 0.163 (0.005) & 0.049 (0.003) & 0.006 (0.001) & 0.049 (0.003) \\
0.500 & 100 & 0.162 (0.005) & 0.050 (0.003) & 0.004 (0.001) & 0.050 (0.003) \\
\bottomrule
\end{tabular}
\end{table}

The HC0 over-rejection at $\rho = 0.5$ is more than three times nominal; the HC3 under-rejection drops to a tenth of nominal. Both are practically meaningful in the applied range commonly seen in saturated specifications: a published TWFE design at $\rho \approx 0.25$ using HC0 has empirical size near $9\%$ where $5\%$ was claimed, so a reported 95\% confidence interval has true coverage closer to 91\%. Switching to LO restores nominal coverage at the cost of a correspondingly wider interval --- the wider interval is the honest one. HC1 is numerically indistinguishable from LO in this balanced design at every $\rho$ tested, consistent with the uniform-leverage prediction; the experiments Q4 and Q5 in Appendix~\ref{oa-sim} break this equivalence under unbalanced designs.

\section{Empirical Application to Piotroski F-Score Returns in CEE Markets}\label{sec-empirical}

We illustrate the variance hierarchy in a setting that motivates the small-$N$ panel regime this paper targets: testing whether the Piotroski F-Score \citep{Piotroski2000} predicts forward equity returns in Central and Eastern European (CEE) emerging markets, where listed-firm counts are modest and the FE saturation $\rho$ is consequently non-trivial in standard panel specifications.

\subsection{Setting and data}\label{sec-empirical-data}

The Piotroski F-Score is a 0--9 integer composite of nine binary accounting signals: four profitability indicators, three leverage/liquidity indicators, and two operating-efficiency indicators. The original \citet{Piotroski2000} study documented a long-high-short-low strategy on the U.S.\ Compustat universe, building on the expectations-errors-in-value/glamour-strategies tradition of \citet{LaPorta1996} and refined in \citet{PiotroskiSo2012}; a complementary composite for growth firms is developed in \citet{Mohanram2005}. Portability of the F-Score strategy to European markets has been investigated by \citet{Tikkanen2018}, who find positive but heterogeneous returns to value-style F-Score sorts depending on the country mix and sample window; evidence specifically for emerging-market CEE settings remains sparse.

We work with a hand-collected Visegrad-country panel covering the Warsaw Stock Exchange (WSE, Poland), the Budapest Stock Exchange (BSE, Hungary), and the Prague Stock Exchange (PSE, Czech Republic) over fiscal years 2010--2024, including both active and delisted firms to address survivorship bias. After dropping firm-years missing either the F-Score or the one-year-ahead total return, the working sample has $n = 217$ firm-year observations across $N = 19$ unique firms and $T = 15$ years.

The point estimate of interest is the coefficient $\beta$ in the two-way fixed-effect regression
\begin{equation}\label{eq-emp-model}
R_{i, t+1} = \beta \cdot F_{i, t} + \alpha_i + \gamma_t + u_{i, t},
\end{equation}
where $R_{i, t+1}$ is the realized one-year total return following the fiscal year-$t$ statements, $F_{i, t} \in \{0, 1, \ldots, 9\}$ is the Piotroski F-Score, and $\alpha_i, \gamma_t$ are firm and fiscal-year fixed effects. The within-FE residualized treatment $\tX_{K_n}$ has $\widehat\tau_n^2 = X'M_K X \approx 559$ on the pooled sample, easily ruling out weak-identification concerns; the proportional FE dimension is $\rho_n = (N + T - 1)/n = 33/217 \approx 0.152$, moderate but non-trivial. This places the application in the regime that motivated the corrected Theorem~\ref{thm-hc}: $\rho$ is large enough that the HC0--HC3 spread should be visible at the second decimal of the standard error.

\subsection{Specifications}\label{sec-empirical-specs}

We report seven specifications. Specification (A) is the baseline of equation \eqref{eq-emp-model} with arithmetic returns. The Hungarian observation 4IG-2017 (an enterprise-software roll-up that experienced a $\approx 1437\%$ return between fiscal year-end 2017 and year-end 2018) exerts substantial influence on (A) through the residual, but does \emph{not} appear among the top-leverage observations because its F-Score (5) is essentially at the panel mean (5.81), producing a near-zero $\widetilde X_i$ after residualization. The largest-leverage observations in (A)--(C) are firms whose F-Score deviated sharply from its firm-year mean (CZG in Czech Republic, $\widetilde X = -2.72$ in 2020 and $+3.15$ in 2021; PLM.WA in Poland, $\widetilde X = -2.61$ in 2020 and $+2.27$ in 2021), at uniform $H_{ii} \approx 0.26$. (B) uses one-plus-log returns instead, and (C) winsorizes raw returns at the 1\% and 99\% quantiles, with (B) preferred. Specification (D) replaces the continuous F-Score with the indicator $\mathbf{1}\{F_{i,t} \geq 7\}$ (the original Piotroski long-side cutoff); (E) restricts to the Polish subsample (WSE only) at higher $\rho = 0.205$; (F) restricts to firms active throughout the sample (the survivorship-biased subsample) at $\rho = 0.154$; (G) replaces year FE with country-year FE, raising the saturation to $\rho = 0.290$ and the leverage spread to $h_{\max}/h_{\min} = 3.24$. The specifications (A)--(F) all maintain approximate leverage uniformity ($H_{\max} \leq 0.31$), while (G) generates four observations with $H_{ii} \geq 0.53$: PANNERGY and RICHTER in Hungary 2011, and CEZ and TABAK in Czech Republic 2012, the only firms observed in their respective country-year cells. The country-year FE absorbs the cell mean using only these two-observation cells, producing the leverage non-uniformity that drives the Spec G divergence between HC1 and LO discussed in Section~\ref{sec-empirical-theorem4}.

\begin{table}[ht]\centering
\caption{TWFE regression of forward returns on the Piotroski F-Score in CEE markets. Coefficient $\widehat\beta$ from equation~\eqref{eq-emp-model}, four SE variants, design diagnostics. ``Spread'' is $h_{\max} / h_{\min}$. All seven specifications report $\widehat\beta$ per F-Score point, except (D) which is on the high-low binary.}\label{tab-emp}
\footnotesize
\setlength{\tabcolsep}{4pt}
\begin{tabular}{lrrrrrrrrr}
\toprule
Spec & $n$ & $d_K$ & $\rho$ & spread & $\widehat\beta$ & naive & HC0 & LO & HC3 \\
\midrule
A. Arith.\ ret.            & 217 & 33 & 0.152 & 2.09 & $+0.00020$ & 0.0431 & 0.0242 & 0.0264 & 0.0288 \\
B. Log ret.                & 217 & 33 & 0.152 & 2.09 & $-0.00147$ & 0.0161 & 0.0140 & 0.0153 & 0.0168 \\
C. Winsorized             & 217 & 33 & 0.152 & 2.09 & $+0.00586$ & 0.0202 & 0.0192 & 0.0210 & 0.0230 \\
D. $\mathbf 1\{F\geq 7\}$ & 217 & 33 & 0.152 & 2.09 & $+0.01951$ & 0.0599 & 0.0582 & 0.0637 & 0.0699 \\
E. Poland only             & 122 & 25 & 0.205 & 1.84 & $+0.01349$ & 0.0212 & 0.0195 & 0.0220 & 0.0249 \\
F. Active-only             & 195 & 30 & 0.154 & 1.96 & $+0.00478$ & 0.0173 & 0.0150 & 0.0164 & 0.0180 \\
G. Country$\times$year FE & 217 & 63 & 0.290 & 3.24 & $+0.00297$ & 0.0161 & 0.0150 & 0.0173 & 0.0202 \\
\bottomrule
\end{tabular}
\end{table}

The point estimate $\widehat\beta$ is statistically indistinguishable from zero in every specification: the most favourable case for the F-Score is Spec~E (Poland-only) with $\widehat\beta = +0.0135$ and SE$_{\mathrm{LO}} = 0.0220$, giving $|t| = 0.61$ and a one-sided $p$-value of approximately $0.27$. In every other specification $|t| < 0.35$. Independently of which variance estimator one selects, the Piotroski F-Score does not detectably predict forward returns in this Visegrad panel over 2010--2024. This is a useful finding by itself for the asset-pricing literature on CEE markets, but its principal role here is to set up the methodological comparison that follows: the inference-robustness exercise that one would naturally undertake to defend a marginal published finding does not, in this case, change the substantive conclusion --- but it does change other quantities by margins that are visible on first inspection and that match Theorem~\ref{thm-hc} quantitatively.

\subsection{Theorem~\ref{thm-hc} in real data}\label{sec-empirical-theorem4}

Table~\ref{tab-emp-ratios} compares the empirical ratios of the HC standard errors with the predictions of Theorem~\ref{thm-hc} under the cross-leverage balance condition ($\omega^2_{\mathrm{eff}} = \omega^2$). The predicted ratios are $\mathrm{SE}_{\mathrm{HC0}}/\mathrm{SE}_{\mathrm{LO}} \to \sqrt{1-\rho}$, $\mathrm{SE}_{\mathrm{HC1}}/\mathrm{SE}_{\mathrm{LO}} \to 1$, and $\mathrm{SE}_{\mathrm{HC3}}/\mathrm{SE}_{\mathrm{LO}} \to 1/\sqrt{1-\rho}$.

\begin{table}[ht]\centering
\caption{Empirical SE ratios vs theoretical predictions from Theorem~\ref{thm-hc}. Under the cross-leverage balance condition ($\omega^2_{\mathrm{eff}} = \omega^2$), predicted ratios are $\sqrt{1-\rho}$, $1$, and $1/\sqrt{1-\rho}$. ``emp.'' = empirical, ``thy.'' = theory.}\label{tab-emp-ratios}
\footnotesize
\setlength{\tabcolsep}{4pt}
\begin{tabular}{lrcccccc}
\toprule
& & \multicolumn{2}{c}{HC0 / LO} & \multicolumn{2}{c}{HC1 / LO} & \multicolumn{2}{c}{HC3 / LO} \\
\cmidrule(lr){3-4} \cmidrule(lr){5-6} \cmidrule(lr){7-8}
Spec & $\rho$ & emp. & thy. & emp. & thy. & emp. & thy. \\
\midrule
A. Arith.\ ret.            & 0.152 & 0.917 & 0.921 & 0.998 & 1.000 & 1.092 & 1.086 \\
B. Log ret.                & 0.152 & 0.914 & 0.921 & 0.995 & 1.000 & 1.096 & 1.086 \\
C. Winsorized              & 0.152 & 0.914 & 0.921 & 0.995 & 1.000 & 1.096 & 1.086 \\
D. $\mathbf 1\{F\geq 7\}$  & 0.152 & 0.913 & 0.921 & 0.994 & 1.000 & 1.097 & 1.086 \\
E. Poland only             & 0.205 & 0.886 & 0.892 & 0.998 & 1.000 & 1.130 & 1.121 \\
F. Active-only             & 0.154 & 0.915 & 0.920 & 0.998 & 1.000 & 1.094 & 1.087 \\
G. Country$\times$year FE  & 0.290 & 0.866 & 0.843 & 1.031 & 1.000 & 1.165 & 1.187 \\
\bottomrule
\end{tabular}
\end{table}

The agreement between empirical ratios and the asymptotic predictions is striking at $\rho = 0.152$, where the gap is at most 0.011 in absolute value across specifications A--D, F. At higher saturation (Spec E, $\rho = 0.205$) the gaps remain below 0.01. At the most heavily saturated specification (G, $\rho = 0.290$), HC0/LO and HC3/LO each deviate by about 0.02 from theory, and the HC1/LO ratio rises to $1.031$ --- the only specification in which the uniform-leverage prediction visibly breaks. The mechanism is observable in the data: in (G), the country-year FE creates two-observation cells in Hungary 2011 (PANNERGY and RICHTER) and Czech Republic 2012 (CEZ and TABAK), where only two firms in the panel share a country-year. Each of the four observations attains $H_{ii} \approx 0.54$, well above the panel mean of $\rho_n = 0.29$, while the modal observation in (G) has $H_{ii}$ close to $0.29$. This is exactly the empirical signature of the leverage non-uniformity formalized in Remark~\ref{rem-hc1-hc2} and produced in the simulations of Q5 in Appendix~\ref{oa-sim}: a handful of high-leverage observations alongside a low-leverage modal mass, generating the slight HC1-over-LO bias that we observe.

The size of the LO-HC0 spread is economically meaningful. In specification G, switching from HC0 to LO widens the 95\% confidence interval by 15\%; switching from HC0 to HC3 widens it by 35\%. In a setting where the published evidence on Piotroski returns in emerging markets is mixed, this is not a small adjustment. The substantive conclusion in our application is robust to the variance-estimator choice because $\widehat\beta$ is essentially zero; in applications where the published $\widehat\beta$ delivers $|t|$ between 2 and 3 under HC0, the LO correction can plausibly move the test below the conventional rejection threshold.

\subsection{Discussion}\label{sec-empirical-discuss}

Three observations from the application generalize beyond it. First, the implied $\rho$ in CEE firm-year panels is non-trivial: in our pilot it reaches 0.29 once country-year FE are introduced, and would rise further with sector-by-year or firm-by-pre-post-event interactions. Standard inference under HC0 in such designs systematically under-states uncertainty. Second, the HC0--HC3 spread is consistent with the homoskedastic prediction even in a setting with substantial cross-sectional and time-series heterogeneity in return volatility; this is the empirical analog of the simulation finding (Q4) that the cross-leverage balance condition $\mu \approx \omega^2$ holds approximately in balanced symmetric designs. Third, when uniform leverage is approximately satisfied (specifications A--F here), HC1 is numerically indistinguishable from HC2/LO; once the design is genuinely unbalanced (Spec G), HC1 begins to deviate, exactly as predicted by Remark~\ref{rem-hc1-hc2}.

The practical recommendation matches the one stated in the introduction: report $\widehat\beta$ with LO standard errors, accompanied by the diagnostic $\widehat\tau_n^2$ and the leverage spread $h_{\max}/h_{\min}$. The first communicates whether the within-FE identifying variation is adequate; the second communicates whether the LO advantage over HC1 is potentially material. In the Visegrad application both diagnostics are well-behaved, leaving the LO recommendation safely applicable.

\section{Discussion}\label{sec-conc}

Although the size of correctly studentized tests does not depend on $\tau^2$, the precision of the point estimate $\hbeta_{K_n}$ does: $\sqrt{nQ_{K_n}} (\hbeta_{K_n} - \beta_0) \Rightarrow N(0, \omega^2)$ by Corollary~\ref{cor-est}, so the standard error scales as $\omega / \sqrt{\tau^2}$. A specification with $\widehat\tau_n^2 = 1$ yields a confidence interval of half-width $\approx 1.96 \omega$; a specification with $\widehat\tau_n^2 = 100$ yields half-width $\approx 0.20\omega$. We therefore recommend that authors using saturated FE specifications report $\widehat\tau_n^2 = n(1 - R_{K_n}^2)\widehat\sigma_X^2$ alongside the point estimate: it does not enter critical values --- there are none to enter, by Theorem~\ref{thm-exact} --- but it communicates how much identifying variation in the treatment survives FE absorption, and so whether the estimate is sharp or fragile.

This paper traces the asymptotic behavior of conventional variance estimators in linear regression with saturated fixed effects, under a drift in which both the proportional fixed-effect dimension $\rho_n$ and the residual treatment variance $\tau_n^2$ vary with the sample size. The headline conclusions are: (i) the natural conjecture that small residual treatment variance produces weak-identification distortion fails in the baseline model, because FE-OLS is unbiased under strict exogeneity regardless of $\tau^2$; (ii) the conventional HC0 estimator is biased downward by a fixed factor $(1-\rho)$, producing over-rejection that grows with saturation; (iii) the conventional HC3 estimator over-corrects in the opposite direction by a factor $1/(1-\rho)$, producing under-rejection and a corresponding loss of power; (iv) HC2/leave-one-out removes the first-order leverage distortion that drives the HC0 and HC3 biases. HC2 is asymptotically exact under conditional homoskedasticity and under design-balanced heteroskedasticity --- the regime in which the cross-leverage variance $\mu$ coincides with the target score-variance $\omega^2$. Under general heteroskedasticity with non-uniform leverage or strong dependence between $\sigma_i^2$ and the within variation, HC2 retains an additional bias of order $\rho |\mu - \omega^2|$ characterized in Theorem~\ref{thm-hc}; the simulation evidence indicates this residual bias is small in balanced symmetric designs but can grow under unbalanced or asymmetric configurations.

The practical recommendation is straightforward: in saturated FE specifications, report $\hbeta_K$ with leave-one-out standard errors, accompanied by the descriptive diagnostic $\widehat\tau_n^2 = n(1 - R_{K_n}^2) \widehat\sigma_X^2$. Specifications with low $\widehat\tau_n^2$ have wide LO intervals, which correctly reflect the loss of identifying variation. Specifications using HC0 should be re-run with LO if heteroskedasticity is plausible; specifications using HC3 should be re-run with LO if $\rho > 0.1$, since HC3 over-correction will yield artificially conservative intervals. In designs with substantial leverage non-uniformity (extreme cell-size variation, ``anchor'' units with far more observations than the bulk), LO retains an edge over HC1 even when the latter is asymptotically valid under uniform leverage.

Three extensions are natural. The scalar-treatment restriction relaxes to vector treatment with a matrix-valued $Q_K$ and corresponding matrix-valued HC estimators; the smallest eigenvalue of $Q_K$ plays the role of the scalar $Q_K$ in our framework. The i.i.d.\ assumption can be replaced by clustered or network dependence, in which case the LO recipe must be replaced by an appropriate cluster-robust variant. Finally, the framework can be embedded in difference-in-differences with staggered adoption, where $\G_K$ encodes the treatment-cohort interaction structure and $\widehat\tau_n^2$ provides a finite-sample identification diagnostic complementary to the heterogeneous-effects critique of \citet{dCdH}.

\section*{Acknowledgments}

The author thanks Blessing Orjika and Dmitrii Verzun for their collaboration in collecting the Visegrad CEE firm-year panel used in Section~\ref{sec-empirical}, originally assembled in the context of joint work on a capstone project. The author also thanks colleagues for detailed comments on multiple draft rounds.

\appendix

This appendix contains material complementing the main text. Appendix~\ref{oa-sim} presents the full simulation study referenced in Section~\ref{sec-sim}, including the four Monte Carlo experiments Q2--Q5 with detailed tables. Appendix~\ref{oa-ar} records an Anderson--Rubin-style robust test for the FE setting and its uniform validity result, included for completeness and as a bridge to a companion paper on measurement-error contamination.

\section{Simulation Evidence}\label{oa-sim}

We generate balanced two-way fixed-effect panels under the DGP
\[
X_{it} = \alpha_i + \gamma_t + \sigma_\eta \cdot \eta_{it}, \qquad Y_{it} = X_{it} \beta_0 + a_i + b_t + u_{it},
\]
with $\alpha_i, \gamma_t, a_i, b_t, \eta_{it} \sim N(0,1)$ independent, $\beta_0 = 1$, and $i = 1, \ldots, N$, $t = 1, \ldots, T$, $n = NT$. For balanced TWFE, $\rho_n = (N + T - 1)/n$ and $Q_K = \sigma_\eta^2 (1 - 1/N - 1/T + 1/(NT))$, so $\tau_n^2$ is controlled through $\sigma_\eta$. We use $\sigma_u = 1$ throughout. Heteroskedastic errors are generated by drawing $u_{it} = \sqrt{(1 + (X_{it} - \bar X)^2 / \sigma^2_X)/2} \cdot \varepsilon_{it}$ with $\varepsilon_{it} \sim N(0, 1)$, so that $\E[u_{it}^2] = 1$ but the conditional variance varies with $X$. Each cell uses $R = 5000$ replications. Monte Carlo standard errors on coverage are reported in parentheses.

\subsection{Naive vs CJN $t$-test under homoskedasticity}\label{oa-sim-q2}

Table~\ref{tab-naive} reports the empirical size of nominal-$5\%$ $t$-tests using the naive and CJN-corrected variance estimators under homoskedastic Gaussian errors. The asymptotic prediction from Theorem~\ref{thm-size}, $2(1 - \Phi(z_{0.975} \sqrt{1 - \rho}))$, is reported in the penultimate column. The empirical size of the naive test tracks the prediction to within Monte Carlo error at every grid point, rising from $5.6\%$ at $\rho = 0.05$ to $17.2\%$ at $\rho = 0.5$. The CJN-corrected test holds at nominal $5\%$ uniformly.

\begin{table}[ht]\centering
\caption{Empirical size of nominal-5\% $t$-tests under homoskedastic Gaussian errors. Asymptotic-naive prediction is $2[1-\Phi(z_{0.975}\sqrt{1-\rho})]$. Monte Carlo SEs in parentheses. $R = 5000$ replications per cell.}\label{tab-naive}
\begin{tabular}{rrrrrrr}
\toprule
$n$ & $\rho$ & $\tau^2_{\mathrm{nom}}$ & size\textsubscript{naive} & size\textsubscript{CJN} & pred\textsubscript{naive} & gap\textsubscript{naive} \\
\midrule
2000 & 0.052 & 10  & 0.056 (0.003) & 0.052 (0.003) & 0.056 & $-0.000$ \\
2000 & 0.052 & 100 & 0.055 (0.003) & 0.048 (0.003) & 0.056 & $-0.002$ \\
2000 & 0.104 & 10  & 0.061 (0.003) & 0.051 (0.003) & 0.064 & $-0.002$ \\
2000 & 0.104 & 100 & 0.066 (0.004) & 0.052 (0.003) & 0.064 & $+0.002$ \\
2000 & 0.252 & 10  & 0.092 (0.004) & 0.048 (0.003) & 0.090 & $+0.002$ \\
2000 & 0.252 & 100 & 0.092 (0.004) & 0.051 (0.003) & 0.090 & $+0.002$ \\
2000 & 0.500 & 10  & 0.172 (0.005) & 0.054 (0.003) & 0.166 & $+0.006$ \\
2000 & 0.500 & 100 & 0.170 (0.005) & 0.055 (0.003) & 0.166 & $+0.004$ \\
10000 & 0.052 & 10  & 0.050 (0.003) & 0.045 (0.003) & 0.056 & $-0.007$ \\
10000 & 0.052 & 100 & 0.058 (0.003) & 0.051 (0.003) & 0.056 & $+0.002$ \\
10000 & 0.101 & 10  & 0.059 (0.003) & 0.048 (0.003) & 0.063 & $-0.004$ \\
10000 & 0.101 & 100 & 0.062 (0.003) & 0.048 (0.003) & 0.063 & $-0.001$ \\
10000 & 0.250 & 10  & 0.087 (0.004) & 0.049 (0.003) & 0.090 & $-0.003$ \\
10000 & 0.250 & 100 & 0.095 (0.004) & 0.050 (0.003) & 0.090 & $+0.005$ \\
10000 & 0.500 & 10  & 0.164 (0.005) & 0.055 (0.003) & 0.166 & $-0.002$ \\
10000 & 0.500 & 100 & 0.172 (0.005) & 0.053 (0.003) & 0.166 & $+0.006$ \\
\bottomrule
\end{tabular}
\end{table}

The independence of size on $\tau^2$ is visible by comparing rows at the same $\rho$ but different $\tau^2$: the size is essentially identical, confirming Theorem~\ref{thm-exact}.

\subsection{HC variants under heteroskedasticity}\label{oa-sim-q3}

The headline HC-hierarchy experiment is reported in Table~\ref{tab-hc-main} of the main text, using heteroskedastic Gaussian errors with $\sigma_i^2 \propto 1 + (X_i - \bar X)^2/\sigma_X^2$. HC0 size rises from $6.2\%$ at $\rho = 0.05$ to $16.3\%$ at $\rho = 0.5$, in line with Corollary~\ref{cor-size-hc}. HC3 size falls from $4.8\%$ to $0.6\%$ over the same range, demonstrating the over-conservative direction of the HC3 over-correction. HC1 tracks HC2/LO closely in this balanced design, since both the degrees-of-freedom correction and the leverage correction equal $1/(1-\rho)$ under uniform leverage. HC2/LO holds nominal $5\%$ across the grid.

Because $X = \alpha + \gamma + \sigma_\eta \eta$ is dominated by the FE components $\alpha + \gamma$ in the small-$\sigma_\eta$ regime that delivers fixed $\rho$, the heteroskedasticity above is effectively \emph{between cells}: $\sigma_i^2$ is approximately a function of the FE indices and varies little within a cell. The next two subsections stress the variance-estimator predictions on two design dimensions this DGP does not probe.

\subsection{Within-$\widetilde X$ heteroskedasticity}\label{oa-sim-q4}

To break the between-cell structure of the previous DGP, we replace $\sigma_i^2 \propto 1 + (X_i - \bar X)^2/\sigma_X^2$ with $\sigma_i^2 \propto 1 + \widetilde X_i^2/Q_K$, where $\widetilde X_i$ is the within-FE residual that drives the score. This makes the conditional error variance a function of the within-cell variation rather than of the full $X$, so that $\sigma^2 = \sigma^2(\widetilde X)$ on the relevant scale. Under this DGP, the unconditional second moment of $u$ is $\E[u^2] = \sigma_u^2$ as before, but the target score-variance grows to $\omega^2 = E[\widetilde X^2 \sigma^2(\widetilde X)]/Q_K = 2 \sigma_u^2$ (using $\E[\widetilde X^4] = 3 Q_K^2$ for Gaussian $\widetilde X$). A naive computation that treats the cross-leverage weights $A_{ij}^2$ as independent of the local error-variance would predict $\mu \approx \sigma_u^2 \neq \omega^2$, hence a visible HC2 size distortion of order $\rho |\mu - \omega^2|$ ($\approx 4$ percentage points above nominal at $\rho = 0.5$ under that naive computation).

Table~\ref{tab-q4} shows that this prediction is too pessimistic: HC2 size stays within $1.6$ percentage points of nominal at every grid point, peaks at $6.6\%$ at $\rho = 0.25$, and returns to nominal at $\rho = 0.5$. The HC0 size at $\rho = 0.5$ is $17.5\%$, essentially identical to its value under the homoskedastic theory $2[1 - \Phi(z_{0.975} \sqrt{1-\rho})] = 16.6\%$ and to its value under the between-cell heteroskedasticity of Table~\ref{tab-hc-main} ($16.3\%$).

\begin{table}[ht]\centering
\caption{Empirical size of nominal-5\% $t$-tests under within-$\widetilde X$ heteroskedasticity. Balanced TWFE, $n = 2000$, $R = 5000$ replications per cell. DGP: $\sigma_i^2 = \sigma_u^2 (1 + \widetilde X_i^2/Q_K)/2$ so that $\omega^2 = 2\sigma_u^2$ and a naive computation would predict $\mu \approx \sigma_u^2 \neq \omega^2$. HC2 size remains within 1.6 percentage points of nominal across the grid, supporting Remark~\ref{rem-omega-eff}: balanced symmetric FE designs with i.i.d.\ sampling preserve $\mu \approx \omega^2$ even when $\sigma_i^2$ depends on $\widetilde X_i$. Monte Carlo SEs in parentheses.}\label{tab-q4}
\begin{tabular}{rrrrrr}
\toprule
$\rho$ & $\tau^2_{\mathrm{nom}}$ & size\textsubscript{HC0} & size\textsubscript{HC1} & size\textsubscript{HC3} & size\textsubscript{LO} \\
\midrule
0.052 & 10  & 0.057 (0.003) & 0.051 (0.003) & 0.044 (0.003) & 0.051 (0.003) \\
0.052 & 100 & 0.066 (0.004) & 0.060 (0.003) & 0.051 (0.003) & 0.060 (0.003) \\
0.104 & 10  & 0.069 (0.004) & 0.054 (0.003) & 0.040 (0.003) & 0.054 (0.003) \\
0.104 & 100 & 0.076 (0.004) & 0.060 (0.003) & 0.044 (0.003) & 0.060 (0.003) \\
0.252 & 10  & 0.113 (0.004) & 0.066 (0.004) & 0.034 (0.003) & 0.066 (0.004) \\
0.252 & 100 & 0.106 (0.004) & 0.066 (0.004) & 0.035 (0.003) & 0.065 (0.003) \\
0.500 & 10  & 0.175 (0.005) & 0.055 (0.003) & 0.006 (0.001) & 0.055 (0.003) \\
0.500 & 100 & 0.172 (0.005) & 0.051 (0.003) & 0.007 (0.001) & 0.050 (0.003) \\
\bottomrule
\end{tabular}
\end{table}

The empirical robustness of HC2 in Table~\ref{tab-q4} is the strong form of the claim in Remark~\ref{rem-omega-eff}: in balanced symmetric designs with i.i.d.\ sampling, the cross-leverage condition $\mu = \omega^2$ holds approximately, leaving HC2 essentially unbiased even when $\sigma_i^2$ depends directly on the within-FE residual that drives the score. The remark cites this as the mechanism: the sum-to-zero constraint on within-cell $\widetilde X$ values induces a positive correlation between $\sigma_j^2$ and the cross-leverage weights $A_{ij}^2$ for $j$ in the same FE cell as $i$, and this correlation absorbs much of what would otherwise be a $\mu \neq \omega^2$ bias. The remaining bias is bounded by $1.6$ percentage points across the grid and shrinks as $\rho \to 0$ and as $\rho \to 1$.

A complementary stress test --- on the leverage-uniformity assumption rather than on the heteroskedasticity structure --- is taken up next.

\subsection{Non-uniform leverage: HC1 vs HC2 divergence}\label{oa-sim-q5}

The previous designs used balanced TWFE in which every observation shares the same FE leverage $h_{\mathrm{FE}, i} = (N+T-1)/(NT)$. Under that assumption, Theorem~\ref{thm-hc} predicts HC1 and HC2 to have the same asymptotic limit and, as Table~\ref{tab-hc-main} and \ref{tab-q4} confirm, they are numerically indistinguishable in simulation. Remark~\ref{rem-hc1-hc2} predicts that this identity breaks under non-uniform leverage, with HC2 retaining consistency for $\omega^2_{\mathrm{eff}}$ and HC1 acquiring a bias whose sign and size depend on the $\widetilde X^2$-weighted leverage $\rho_X := (nQ_{K_n})^{-1} \sum_i \widetilde X_i^2 H_{ii}$.

To isolate this prediction, we move to unbalanced one-way FE: $N$ units, with unit $i$ observed for $T_i$ periods, $T_i$ varying across units. The FE leverage is $h_{\mathrm{FE}, i} = 1/T_i$ and varies with unit. We hold $\rho \approx 0.25$ fixed across three configurations of increasing leverage spread:
\begin{itemize}
\item \textbf{balanced} (spread 1): 200 units, every $T_i = 4$, uniform $h = 0.25$, $n = 800$.
\item \textbf{moderate} (spread 3): 100 units with $T_i = 2$, 100 units with $T_i = 6$; $h \in \{0.167, 0.5\}$; $n = 800$.
\item \textbf{strong} (spread 10): 180 units with $T_i = 2$, 20 units with $T_i = 20$; $h \in \{0.05, 0.5\}$; $n = 760$.
\end{itemize}
We run each configuration under both Gaussian homoskedastic errors and within-$\widetilde X$ heteroskedasticity.

\begin{table}[ht]\centering
\caption{Empirical size of nominal-5\% $t$-tests under unbalanced one-way FE designs holding $\rho \approx 0.25$ fixed while varying the leverage spread $h_{\max}/h_{\min}$. \textit{balanced}: 200 units, $T_i = 4$, uniform $h = 0.25$. \textit{moderate}: 100 units $T_i = 2$, 100 units $T_i = 6$. \textit{strong}: 180 units $T_i = 2$, 20 units $T_i = 20$. $R = 5000$ replications per cell. Monte Carlo SEs in parentheses.}\label{tab-q5}
\begin{tabular}{llrrrrrr}
\toprule
errors & spread & $n$ & $\rho$ & size\textsubscript{HC0} & size\textsubscript{HC1} & size\textsubscript{HC3} & size\textsubscript{LO} \\
\midrule
homo. & balanced & 800 & 0.249 & 0.094 (0.004) & 0.058 (0.003) & 0.029 (0.002) & 0.056 (0.003) \\
homo. & moderate & 800 & 0.249 & 0.087 (0.004) & 0.045 (0.003) & 0.022 (0.002) & 0.050 (0.003) \\
homo. & strong   & 760 & 0.262 & 0.081 (0.004) & 0.044 (0.003) & 0.026 (0.002) & 0.052 (0.003) \\
het-$\widetilde X$ & balanced & 800 & 0.249 & 0.119 (0.005) & 0.070 (0.004) & 0.037 (0.003) & 0.069 (0.004) \\
het-$\widetilde X$ & moderate & 800 & 0.249 & 0.097 (0.004) & 0.055 (0.003) & 0.032 (0.002) & 0.062 (0.003) \\
het-$\widetilde X$ & strong   & 760 & 0.262 & 0.080 (0.004) & 0.042 (0.003) & 0.029 (0.002) & 0.056 (0.003) \\
\bottomrule
\end{tabular}
\end{table}

Table~\ref{tab-q5} confirms the prediction. Under homoskedasticity, HC1 size drops from $5.8\%$ (balanced, uniform leverage) to $4.4\%$ (strong spread), a 1.4 percentage-point decline that exceeds four Monte Carlo standard errors. HC2 (=LO) stays close to the nominal $5\%$ across all three configurations. The same pattern repeats under within-$\widetilde X$ heteroskedasticity at slightly elevated levels: HC1 falls from $7.0\%$ to $4.2\%$, HC2 from $6.9\%$ to $5.6\%$. The HC1 size systematically falls below HC2 size as the leverage spread grows, reflecting the over-correction by the global degrees-of-freedom factor relative to the per-observation correction.

The direction of the divergence is consistent with the theoretical prediction. In this DGP, the high-leverage observations (small $T_i$, $h = 0.5$) sit in units with lower within-unit variation $\sigma_\eta^2 (1 - 1/T_i)$, so $\widetilde X^2$ and $H$ are negatively correlated: the leverage-weighted average $\rho_X$ is smaller than the unweighted average $\rho = d_K/n$. Theorem~\ref{thm-hc} then implies $\widehat\Omega_{\mathrm{HC1}}/\widehat\Omega_{\mathrm{HC2}} \to (1 - \rho_X)/(1 - \rho) > 1$, so HC1 overestimates the score-variance and the corresponding $t$-test under-rejects --- as observed.

A subtler feature of the heteroskedastic rows: the HC2 bias under within-$\widetilde X$ heteroskedasticity \emph{decreases} as the leverage spread grows (from $6.9\%$ at balanced to $5.6\%$ at strong). In the strong configuration, $90\%$ of units have $T_i = 2$, which forces $\widetilde X_{i, 1} = -\widetilde X_{i, 2}$ exactly within each such unit. The DGP $\sigma_i^2 \propto 1 + \widetilde X_i^2/Q_K$ then makes $\sigma_{i, 1}^2 = \sigma_{i, 2}^2$ within those units, converting the within-cell heteroskedasticity into between-cell heteroskedasticity for which $\mu \approx \omega^2$ holds essentially exactly. The HC2 size moves back towards nominal as a result. This is a useful illustration of how the design-balance condition of Remark~\ref{rem-omega-eff} can hold for design-specific reasons that need not be obvious from the DGP statement.

\subsection{Independence of $\tau^2$}

Each row in Tables~\ref{tab-naive} and \ref{tab-q4} of this appendix and Table~\ref{tab-hc-main} of the main text is reported at two values of $\tau^2$ at the same $\rho$. Across all three tables, the dependence of empirical size on $\tau^2$ is within Monte Carlo error of zero, confirming the theoretical prediction that, under strict exogeneity, the size of correctly studentized tests depends on $\rho$ but not on $\tau^2$. This is the central negative finding of the paper: there is no Stock--Yogo-style threshold in $\tau^2$ to enforce, because $\tau^2$ does not enter the asymptotic null distribution of the studentized test statistic.

\section{The FE-AR Statistic and Its Uniform Validity}\label{oa-ar}

For completeness, and as a bridge to a companion paper on measurement-error contamination, we record the uniform validity of an Anderson--Rubin-style test for the FE setting:
\begin{equation}\label{eq-ar}
\mathrm{AR}_K(\beta_0) = \frac{(X' M_K (Y - X \beta_0))^2}{\hOmega_{\mathrm{LO}}(\beta_0)},
\end{equation}
where $\hOmega_{\mathrm{LO}}(\beta_0) = \sum_i \tX_{K_n, i}^2 (\widetilde Y_i - \tX_{K_n, i} \beta_0)^2 / (1 - H_{ii})$ is the LO score-variance estimator evaluated at the hypothesized $\beta_0$.

\begin{theorem}[Uniform validity of FE-AR]\label{thm-ar}
Under Assumption~\ref{ass-reg}, $\mathrm{AR}_{K_n}(\beta_0) \Rightarrow \chi^2_1$ uniformly over $\tau^2 \in (0, \infty]$ when $H_0: \beta = \beta_0$ holds. The confidence set $\mathcal C_{1-\alpha} = \{\beta_0 : \mathrm{AR}_{K_n}(\beta_0) \leq \chi^2_{1, 1-\alpha}\}$ has asymptotic coverage at least $1-\alpha$ uniformly over $(\rho, \tau^2) \in [0, \bar\rho] \times (0, \infty]$ for any $\bar\rho < 1$.
\end{theorem}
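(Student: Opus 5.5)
Under $H_0$ we have $M_{K_n}(Y - X\beta_0) = M_{K_n} u$, because the fixed effects are annihilated. So the numerator of \eqref{eq-ar} is $(X' M_{K_n} u)^2$, and the restricted residual entering $\hOmega_{\mathrm{LO}}(\beta_0)$ is $(M_{K_n} u)_i$, not the full-regression residual $\widehat u_i$. I would therefore write
\[
\mathrm{AR}_{K_n}(\beta_0) = \frac{S_n^2}{V_n}, \qquad S_n := \frac{X' M_{K_n} u}{\sqrt{nQ_{K_n}}}, \qquad V_n := \frac{\hOmega_{\mathrm{LO}}(\beta_0)}{nQ_{K_n}}.
\]
Coordinate~1 of Theorem~\ref{thm-clt} gives $S_n \Rightarrow N(0,\omega^2)$, so $S_n^2 \Rightarrow \omega^2\chi^2_1$. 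The whole proof then reduces to showing $V_n \to_p \omega^2$, after which Slutsky gives the $\chi^2_1$ limit.

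\textbf{Concentration step.} I would rerun the proof of Lemma~\ref{lem-conc} with the projection $A$ replaced by $M_{K_n}$, which is also symmetric idempotent with diagonal entries $1-h_{ii}$. The weights are $w_i = \tX_{K_n,i}^2/(1-H_{ii})$, which satisfy the max and sum conditions by Assumption~\ref{ass-reg}(iii), the bound $\bar h<1$, and Lemma~\ref{lem-hatQ}. This yields
\[
V_n = (nQ_{K_n})^{-1}\sum_i w_i \E[(M_{K_n}u)_i^2\mid X,D] + o_p(1).
\]

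\textbf{Conditional expectation.} Next I would expand $(M_{K_n}u)_i = u_i - (P_{K_n}u)_i$. This splits the conditional expectation into three pieces:
\[
\sigma_i^2 \;-\; 2h_{ii}\sigma_i^2 \;+\; \sum_j (P_{K_n})_{ij}^2\sigma_j^2 .
\]

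\textbf{Main obstacle.} The hard step is showing that the weighted sum of these three pieces equals $(1-H_{ii})\sigma_i^2$ up to $o_p(1)$ after $\tX^2$-weighting. That is, I need
\[
(nQ_{K_n})^{-1}\sum_i \tX_{K_n,i}^2\,\frac{\sum_{j}(P_{K_n})_{ij}^2\sigma_j^2 - h_{ii}\sigma_i^2}{1-H_{ii}} = o_p(1),
\]
together with the observation that $H_{ii}-h_{ii} = \tX_{K_n,i}^2/(X'M_{K_n}X) = o_p(1)$ uniformly, by Assumption~\ref{ass-reg}(iii). For this display I would argue conditionally on the design $D$ and use i.i.d.\ sampling (Assumption~\ref{ass-reg}(i)). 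Given $D$, $\sum_j P_{ij}^2 = h_{ii}$, so the numerator is a $P_{ij}^2$-weighted difference $\sum_j P_{ij}^2(\sigma_j^2-\sigma_i^2)$. Because the observations within a cell are exchangeable given $D$, this difference has conditional mean zero against the $\tX_i^2$ weighting, up to the within-cell dependence of $\tX_i$ on its cell mates. That dependence is of order $\max_i h_{ii}\cdot\max_i\tX_i^2/(nQ_{K_n})$. Its variance would be controlled by the same de Jong--type bound, $O(\max_i w_i\sum_i w_i)=o_p((nQ_{K_n})^2)$. This exchangeability/variance argument is the step I expect to need the most care, and I would try it first in the cell-by-cell form.

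\textbf{Conclusion and uniformity.} With that step, $V_n = \omega_n^2 + o_p(1)\to_p\omega^2$ by Assumption~\ref{ass-reg}(iv), and hence $\mathrm{AR}_{K_n}(\beta_0)\Rightarrow\chi^2_1$. For uniformity, every bound above is normalized by $nQ_{K_n}$, and none uses a rate on $\tau_n^2$ beyond $\tau^2>0$. So I would argue by contradiction: take a sequence $(\rho_n,\tau_n^2)\in[0,\bar\rho]\times(0,\infty]$ along which the coverage stays below $1-\alpha-\epsilon$, extract a subsequence with $(\rho_n,\tau_n^2)\to(\rho,\tau^2)$ (compactifying $\tau^2$ at $\infty$), and apply the pointwise result along it. The coverage claim for $\mathcal C_{1-\alpha}$ then follows by test inversion.
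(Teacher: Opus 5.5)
Your proof has a genuine gap, and it is exactly the step you flag as the main obstacle: it cannot be closed under Assumption~\ref{ass-reg}. Your reduction is correct, and after it you need
\[
\frac{1}{nQ_{K_n}}\sum_i \frac{\tX_{K_n,i}^2}{1-H_{ii}}\sum_{j\neq i}(P_{K_n})_{ij}^2\big(\sigma_j^2-\sigma_i^2\big)=o_p(1).
\]
With asymptotically uniform leverage, however, the left side converges to $\rho(\mu-\omega^2)$. Here $\mu$ is the cross-leverage variance of \eqref{eq-mu-def}, built from $M_{K_n}$ rather than $A$, which is immaterial to first order. So for $\rho>0$ your display \emph{is} the cross-leverage balance condition $\mu=\omega^2$; it does not follow from i.i.d.\ sampling.

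Exchangeability within a cell only makes the own moments $\E[\tX_i^2\sigma_i^2]$ equal to one another, and the cross moments $\E[\tX_i^2\sigma_j^2]$, $j\neq i$, equal to one another. It does not equate the two families. They differ as soon as $\sigma^2(X_i,\G_{K_n,i})$ varies with $X_i$, which is precisely the heteroskedasticity AR is meant to handle. The discrepancy is a nonzero mean, so the de Jong variance bound does not help. It is also not of order $\max_i h_{ii}\max_i\tX_i^2/(nQ_{K_n})$, since $\max_i h_{ii}\geq\rho_n$.

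A concrete counterexample: take one-way FE with cells of fixed size $T\geq3$, $X=\alpha_c+\xi$ with $\xi$ i.i.d.\ mean zero and unit variance, and $u=\sigma(\xi)\varepsilon$. A direct computation gives $\omega^2-\mu=\frac{T-2}{T}\,\mathrm{Cov}(\xi^2,\sigma^2(\xi))$ and $V_n\to_p\omega^2-\rho(\omega^2-\mu)$ with $\rho=1/T$. For $T=3$, $\xi\sim N(0,1)$ and $\sigma^2(\xi)=1+\xi^2$:
\begin{itemize}
\item $\omega^2=20/9$ and $\mu=14/9$;
\item $V_n\to_p 2$;
\item hence $\mathrm{AR}_{K_n}(\beta_0)\Rightarrow\frac{10}{9}\chi^2_1$, and the nominal-5\% test has asymptotic size about $6.3\%$.
\end{itemize}
This holds even though (ii)--(v) of Assumption~\ref{ass-reg} are satisfied and cells are i.i.d. Any bounded $\sigma^2(\cdot)$ increasing in $\xi^2$ gives the same sign. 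Your bound does work when $\max_i h_{ii}\to0$, but that forces $\rho=0$, while the theorem claims uniformity over $\rho\in[0,\bar\rho]$.

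The paper's proof does not derive this step either. It takes the denominator limit $\omega^2_{\mathrm{eff}}=(1-\rho)\omega^2+\rho\mu$ from Theorem~\ref{thm-hc}, which itself needs more than Assumption~\ref{ass-reg}: asymptotically uniform leverage, bounded conditional fourth moments, and existence of $\lim\mu_n$. It then \emph{imposes} $\mu=\omega^2$ before applying the continuous mapping theorem. Your proof needs the same extra hypothesis, stated so that it holds along the drifting sequences in your subsequence argument. Without it, your concentration and conditional-expectation steps deliver only $\mathrm{AR}_{K_n}(\beta_0)\Rightarrow(\omega^2/\omega^2_{\mathrm{eff}})\chi^2_1$.

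Two of your details are sharper than the paper's.
\begin{itemize}
\item $\hOmega_{\mathrm{LO}}(\beta_0)$ is indeed built from $(M_{K_n}u)_i$, not from $\widehat u_i$. The paper cites Theorem~\ref{thm-hc} without comment; the rank-one difference $H_X$ changes the cross-leverage sum only by $o_p(1)$ under (iii).
\item For the sum condition on your weights you should not invoke Lemma~\ref{lem-hatQ}, whose hypothesis $\max_i(P_{K_n})_{ii}=o_p(1)$ is incompatible with $\rho>0$. Since $M_{K_n}$ annihilates $\E[X\mid\G_{K_n}]$, the bound $X'M_{K_n}X=\xi'M_{K_n}\xi\leq\xi'\xi=O_p(nQ_{K_n})$ suffices.
\end{itemize}
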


\begin{proof}
Under $H_0$, $Y - X\beta_0 = u$, so $\mathrm{AR}_{K_n}(\beta_0) = S_n^2 / (\hOmega_{\mathrm{LO}}/(nQ_{K_n}))$ where $S_n := (nQ_{K_n})^{-1/2} X' M_{K_n} u \Rightarrow \mathcal Z \sim N(0, \omega^2)$ by Theorem~\ref{thm-clt} and the denominator converges to $\omega^2_{\mathrm{eff}}$ by Theorem~\ref{thm-hc}. Under the cross-leverage balance condition $\mu = \omega^2$, $\omega^2_{\mathrm{eff}} = \omega^2$ and the continuous mapping theorem yields $\chi^2_1$. Both numerator and denominator are normalized by $\sqrt{nQ_{K_n}}$, so the limit does not depend on $\tau^2$.
\end{proof}

\begin{remark}[Relationship to the LO Wald interval]\label{rem-ar-wald}
The FE-AR confidence set is computed in closed form by solving a quadratic in $\beta_0$. When the leading coefficient $(X' M_K X)^2 - c \sum \tX_i^4/(1 - H_{ii})$ is positive --- the generic case for $n(1-\rho) \gg \chi^2_{1, 1-\alpha}$ --- the set is bounded and, by direct expansion around the LO Wald midpoint $\hbeta_K$, agrees with the LO Wald interval to first order. Monte Carlo evidence collected while validating the variance-estimator results (5000 reps per cell across a grid covering $\tau^2 \in \{1, 3, 5, 10, 25, 100\}$, $\rho \in \{0.10, 0.25\}$, and three error distributions including Gaussian heteroskedastic and $t_5$ homoskedastic) confirms this: FE-AR and LO-Wald empirical coverages agree to within Monte Carlo standard error at every grid point, and the AR confidence set is bounded in 100\% of replications. Under the baseline assumptions of the present paper, FE-AR therefore offers no separate practical recommendation beyond ``use LO''.

The result is structural: under strict exogeneity, the score $X' M_K u$ is unbiased under $H_0$, and there is no analogue of the TSLS bias that motivates Anderson--Rubin testing in instrumental variables. FE-AR matters more in settings where the moment condition $\E[X' M_K (Y - X \beta_0)] = 0$ at the true $\beta_0$ is violated --- e.g.\ under measurement error in the treatment, where the companion paper derives a non-trivial Stock--Yogo critical value.
\end{remark}

\bibliographystyle{chicago}
\bibliography{bibliography}

\end{document}